\documentclass[a4paper,11pt]{article}
\usepackage[margin=1in]{geometry}
\usepackage[T1]{fontenc}

\usepackage{graphicx}

\usepackage[numbers,sort&compress]{natbib}

\usepackage{amsmath,amssymb,amsthm}

%% article ->

\newtheorem{theorem}{Theorem}
\newtheorem{lemma}{Lemma}
\newtheorem{corollary}{Corollary}
\let\subsubsection\paragraph
%% <-

\usepackage{dsfont}
\usepackage{comment,hyperref}
\usepackage{tikz}
\usepackage{tikz-network}
\usepackage{pgfplots}
\usepackage{varwidth}
\usepackage{enumitem}
\pgfplotsset{compat=1.10}
\usepgfplotslibrary{fillbetween}
\usetikzlibrary{backgrounds}
\usetikzlibrary{patterns}
\usetikzlibrary{arrows}
%Algorithms
\usepackage{multirow}
\usepackage[ruled]{algorithm2e}
\usepackage{xcolor}

\newlength{\algofontsize}
\setlength{\algofontsize}{6pt}
\hypersetup{
	colorlinks,
	linkcolor={red!50!black},
	citecolor={blue!50!black},
	urlcolor={blue!80!black}
}

% commands
\newcommand{\calG}{\mathcal{G}}
\newcommand{\calN}{\mathcal{N}}
\newcommand{\calP}{\mathcal{P}}
\newcommand{\RR}{\mathbb{R}}
\newcommand{\NN}{\mathbb{N}}
\newcommand{\ie}{i.e.,\xspace}

\DeclareMathOperator{\ttop}{top}

%% autoref

\def\equationautorefname~#1\null{(#1)\null}

\begin{document}
	%
	%\title{Contribution Title\thanks{Supported by organization x.}}
	\title{Optimal Impartial Correspondences}
	
	%% article ->
	\author{%
		Javier Cembrano\thanks{Institut für Mathematik, Technische Universität Berlin, Germany}
		\and
		Felix Fischer\thanks{School of Mathematical Sciences, Queen Mary University of London, UK}
		\and
		Max Klimm\thanks{Institut für Mathematik, Technische Universität Berlin, Germany}
	}
	\date{\vspace{-1em}}
	
	\maketitle              % typeset the header of the contribution
	\begin{abstract}
		We study mechanisms that select a subset of the vertex set of a directed graph in order to maximize the minimum indegree of any selected vertex, subject to an impartiality constraint that the selection of a particular vertex is independent of the outgoing edges of that vertex. For graphs with maximum outdegree $d$, we give a mechanism that selects at most $d+1$ vertices and only selects vertices whose indegree is at least the maximum indegree in the graph minus one. We then show that this is best possible in the sense that no impartial mechanism can only select vertices with maximum degree, even without any restriction on the number of selected vertices. We finally obtain the following trade-off between the maximum number of vertices selected and the minimum indegree of any selected vertex: when selecting at most~$k$ vertices out of $n$, it is possible to only select vertices whose indegree is at least the maximum indegree minus $\lfloor(n-2)/(k-1)\rfloor+1$.
	\end{abstract}

	\section{Introduction}
	
	Impartial selection is the problem of selecting vertices with large indegree in a directed graph, in such a way that the selection of a particular vertex is independent of the outgoing edges of that vertex. The problem models a situation where agents nominate one another for selection and are willing to offer their true opinion on other agents as long as this does not affect their own chance of being selected. 
	
	The selection of a single vertex is governed by strong impossibility results. For graphs with maximum outdegree one, corresponding to situations where each agent submits a single nomination, every impartial selection rule violates one of two basic axioms~\citep{holzman2013impartial} and as a consequence must fail to provide a non-trivial multiplicative approximation to the maximum indegree. For graphs with arbitrary outdegrees, corresponding to situations where each agent can submit multiple nominations, impartial rules violate an even weaker axiom and cannot provide a non-trivial approximation in a multiplicative or additive sense~\citep{alon2011sum,cembrano2022impartial}. These impossibilities largely remain in place if rather than a single vertex we want to select any fixed number of vertices, but positive results can be obtained if we relax the requirement that the same number of vertices must be selected in every graph~\citep{TaOh14a,bjelde2017impartial}.
	
	From a practical point of view, the need for such a relaxation should not necessarily be a cause for concern. Indeed, situations in the real world to which impartial selection is relevant often allow for a certain degree of flexibility in the number of selected agents. The exact number of papers accepted to an academic conference is usually not fixed in advance but depends on the number and quality of submissions. Best paper awards at conferences are often given in overlapping categories, and some awards may only be given if this is warranted by the field of candidates. The Fields medal is awarded every four years to two, three, or four mathematicians under the age of $40$. Examples at the more extreme end of the spectrum of flexibility include the award of job titles such as vice president or deputy vice-principal. Such titles can often be given to a large number of individuals at a negligible cost per individual, but should only be given to qualified individuals so as not to devalue the title.
	
	\citet{TaOh14a} specifically studied what they call nomination correspondences, \ie rules that may select an arbitrary set of vertices in any graph. For graphs with maximum outdegree one a particular such rule, plurality with runners-up, satisfies impartiality and appropriate versions of the two axioms of \citet{holzman2013impartial}. The rule selects any vertex with maximum indegree; if there is a unique such vertex, any vertex whose indegree is smaller by one and whose outgoing edge goes to the vertex with maximum indegree is selected as well. An appropriate measure for the quality of rules that select varying numbers of vertices is the difference in the worst case between the best vertex and the worst selected vertex, and we can call a rule $\alpha$-min-additive if the maximum difference, taken over all graphs, between these two quantities is at most $\alpha$. In this terminology, plurality with runners-up is $1$-min-additive.
	
	As \citeauthor{TaOh14a} point out, it may be desirable in practice to ensure that the maximum number of vertices selected is not too large, a property that plurality with runners-up clearly fails. It is therefore interesting to ask whether there exist rules that are $\alpha$-min-additive and never select more than $k$ vertices, for some fixed $\alpha$ and $k$. For graphs with outdegree one, \citeauthor{TaOh14a} answer this question in the affirmative: a variant of plurality with runners-up that breaks ties according to a fixed ordering of the vertices remains $1$-min-additive but never selects more than two vertices.
	
	\subsubsection{Our Contribution}
	
	Our first result provides a generalization of the result of \citeauthor{TaOh14a} to graphs with larger outdegrees: for graphs with maximum outdegree $d$, it is possible to achieve $1$-min-additivity while selecting at most $d+1$ vertices. For the particular case of graphs with unbounded outdegrees we obtain a slight improvement, by guaranteeing $1$-min-additivity without ever selecting all vertices. 
	Our second result establishes that $1$-min-additivity is best possible, thus ruling out the existence of impartial mechanisms that only select vertices with maximum indegree. This holds even when no restrictions are imposed on the number of selected vertices, and is shown alongside analogous impossibility results concerning the maximization of the median or mean indegree of the selected vertices instead of their minimum indegree. 
	Our third result provides a trade-off between the maximum number of vertices selected, where smaller is better, and the minimum indegree of any selected vertex, where larger is better: if we are allowed to select at most $k$ vertices out of $n$, we can guarantee $\alpha$-min-additivity for $\alpha=\lfloor (n-2)/(k-1)\rfloor+1$. This is achieved by removing a subset of the edges from the graph before plurality with runners-up is applied, in order to guarantee impartiality while selecting fewer vertices. We do not know whether this last result is tight and leave open the interesting question for the optimal trade-off between the number and quality of selected vertices.
	
	\subsubsection{Related Work}
	
	Impartiality as a property of an economic mechanism was introduced by \citet{de2008impartial}, and first applied to the selection of vertices in a directed graph by \citet{alon2011sum} and \citet{holzman2013impartial}. Whereas \citeauthor{holzman2013impartial} gave axiomatic characterizations for mechanisms selecting a single vertex when all outdegrees are equal to one, \citeauthor{alon2011sum} studied the ability of impartial mechanisms to approximate the maximum indegree for any fixed number of vertices when there are no limitations on outdegrees.
	
	Both sets of authors obtained strong impossibility results, which a significant amount of follow-up work has since sought to overcome. Randomized mechanisms providing non-trivial multiplicative guarantees had already been proposed by \citeauthor{alon2011sum}, and \citet{fischer2015optimal} subsequently achieved the best possible such guarantee for the selection of one vertex. Starting from the observation that worst-case instances for randomized mechanisms have small indegrees, \citet{bousquet2014near} developed a mechanism that is asymptotically optimal as the maximum indegree grows, and \citet{caragiannis2019impartial, caragiannis2021impartial} initiated the study of mechanisms providing additive rather than multiplicative guarantees. \citet{cembrano2022impartial} subsequently identified a deterministic mechanism that provides non-trivial additive guarantees whenever the maximum outdegree is bounded and established that no such guarantees can be obtained with unbounded outdegrees. Randomized mechanisms have been also studied from an axiomatic point of view by \citet{mackenzie2015symmetry, MacK20a}.
	
	\citet{bjelde2017impartial} gave randomized mechanisms with improved multiplicative guarantees for the selection of more than one vertex and observed that when selecting at most $k$ vertices rather than exactly $k$, deterministic mechanisms can in fact achieve non-trivial guarantees. An axiomatic study of \citet{TaOh14a} for the outdegree-one case came to the same conclusion: when allowing for the selection of a varying number of vertices, the impossibility result of \citeauthor{holzman2013impartial} no longer holds. \citet{Tamu16a} subsequently characterized a mechanism proposed by \citeauthor{TaOh14a}, which in some cases selects all vertices, as the unique minimal mechanism satisfying impartiality, anonymity, symmetry, and monotonicity.
	
	Impartial mechanisms have finally been proposed for various problems other than selection, including peer review \citep{aziz2019strategyproof, kurokawa2015impartial, mattei2020peernomination, xu2018strategyproof}, rank aggregation \citep{kahng2018ranking}, progeny maximization \citep{babichenko2020incentive, zhang2021incentive}, and network centralities \citep{wkas2019random}.

	\section{Preliminaries}
	
	For $n\in \NN$, let $[n]=\{1,2,\dots,n\}$, and let 
	\[
	\calG_n = \left\{(V,E):V=[n],E \subseteq (V\times V) \setminus \bigcup_{v\in V}\{(v,v)\}\right\}
	\]
	be the set of directed graphs with $n$ vertices and no loops. Let $\calG = \bigcup_{n\in \NN} \calG_n$. For $G=(V,E)\in\calG$ and $v\in V$, let $N^+(v, G)=\{u\in V: (v,u) \in E\}$ be the out-neighborhood and $N^-(v, G)=\{u\in V:(u,v)\in E\}$ the in-neighborhood of~$v$ in~$G$. Let $\delta^+(v,G)=|N^+(v,G)|$ and $\delta^-(v,G)=|N^-(v,G)|$ denote the outdegree and indegree of~$v$ in~$G$,
	and $\Delta(G) = \max_{v\in V} \delta^-(v, G)$ the maximum indegree of any vertex in $G$.
	When the graph is clear from the context, we will sometimes drop $G$ from the notation and write $N^+(v)$, $N^-(v)$, $\delta^+(v)$, $\delta^-(v)$, 
	and~$\Delta$.
	Let $\ttop(G)=\max\{v\in V:\delta^-(v)=\Delta(G)\}$ denote the vertex of $G$ with the largest index among those with maximum indegree.
	For $n\in\NN$ and $d\in[n-1]$, let $\calG_n(d) = \{(V, E)\in \calG_n: \text{$\delta^+(v)\leq d$ for every $v\in V$}\}$ be the set of graphs in $\calG_n$ with maximum outdegree at most $d$, and $\calG(d) = \bigcup_{n\in \NN} \calG_n(d)$.
	
	A \textit{$k$-selection mechanism} is then given by a family of functions $f:\calG_n \to 2^{[n]}$, one for each $n\in\NN$, mapping each graph to a subset of its vertices, where we require that $|f(G)|\leq k$ for all $G\in\calG$.
	In a slight abuse of notation, we will use $f$ to refer to both the mechanism and to individual functions of the family.
	Given $G=(V,E)\in \calG$ and $v\in V$, let $\calN_v(G)=\{(V,E')\in \calG:\ E\setminus(\{v\}\times V)=E'\setminus(\{v\}\times V)\}$ be the set neighboring graphs of $G$ with respect to $v$, in the sense that they can be obtained from $G$ by changing the outgoing edges of $v$.
	Mechanism $f$ is \textit{impartial} on $\calG'\subseteq \calG$ if on this set of graphs the outgoing edges of a vertex have no influence on its selection, \ie if for every graph $G = (V,E)\in \calG'$, $v\in V$, and $G'\in \calN_v(G)$, it holds that $f(G)\cap\{v\}=f(G')\cap\{v\}$.
	Given a $k$-selection mechanism $f$ and an aggregator function $\sigma: 2^{\RR}\to \RR$ such that $\sigma(\emptyset)=0$ and, for every $S\subseteq \RR$ with $|S|\geq 1$, $\min\{x\in S\}\leq \sigma(S)\leq \max\{x\in S\}$, we say that $f$ is \textit{$\alpha$-$\sigma$-additive} on $\calG'\subseteq\calG$, for $\alpha\geq 0$, if for every graph in $\calG'$ the function $\sigma$ evaluated on the choice of $f$ differs from the maximum indegree by at most $\alpha$, \ie if
	\[
	\sup_{\substack{G\in \calG'}} \Bigl\{ \Delta(G)-\sigma\left(\{\delta^-(v, G)\}_{v\in f(G)}\right) \Bigr\} \leq \alpha.
	\]
	We will specifically be interested in the cases where $\sigma$ is the minimum, the median, and the mean, and respectively call a mechanism \textit{$\alpha$-min-additive}, \textit{$\alpha$-median-additive}, and \textit{$\alpha$-mean-additive} in these cases.
	
	\section{Plurality with Runners-up}
	\label{sec:pwru}
	
	Focusing on the case with maximum outdegree one, \citet{TaOh14a} proposed a mechanism they called \textit{plurality with runners-up}. The mechanism, which we describe formally in \autoref{alg:PWRU}, selects all vertices with maximum indegree; if there is a unique such vertex, then any vertex with an outgoing edge to that vertex whose indegree is smaller by one is selected as well. The idea behind this mechanism is that vertices in the latter category would be among those with maximum degree if their outgoing edge was deleted, and thus any impartial mechanism seeking to select the vertices with maximum degree would also have to select those vertices. Plurality with runners-up is impartial on $\calG(1)$, and in any graph with $n$ vertices selects between $1$ and $n$ vertices whose degree is equal to the maximum degree or the maximum degree minus one. It is thus an impartial and $1$-min-additive $n$-selection mechanism on $\calG_n(1)$ for every $n\in \NN$. It is natural to ask whether a similar additive guarantee can be obtained for more general settings.
	In this section, we answer this question in the affirmative, and in particular study for which values of $n$, $k$, and $d$ there exists an impartial and $1$-min-additive $k$-selection mechanism on $\calG_n(d)$. We will see later, in \autoref{sec:imp-results}, that $1$-min-additivity is in fact best possible for all cases covered by our result, with the exception of the boundary case where $n=2$.
	\begin{algorithm}[t]
		\SetAlgoNoLine
		\KwIn{Digraph $G=(V,E)\in \calG_n(1)$}
		\KwOut{Set $S\subseteq V$ of selected vertices}
		Let $S=\{v\in V: \delta^-(v)=\Delta(G)\}$\;
		\If{$S=\{v\}$ for some $v\in V$}{
			$S\xleftarrow{} S\cup \{u\in V: \delta^-(u)=\Delta(G)-1 \text{ and } (u,v)\in E\}$
		}
		{\bf Return} $S$
		\caption{Plurality with runners-up}
		\label{alg:PWRU}
	\end{algorithm}
	
	While \citeauthor{TaOh14a} do not limit the maximum number of selected vertices, they discuss briefly a modification of their mechanism that retains impartiality and $1$-min-additivity but selects at most $2$ vertices. Instead of all vertices with maximum indegree, the modified mechanism breaks ties in favor of a single maximum-degree vertex using a fixed ordering of the vertices. In order to guarantee impartiality, the modified mechanism then also selects any vertex that would be selected in the graph obtained by deleting the outgoing edge of that vertex. The assumption that every vertex has at most one outgoing edge means that at most one additional vertex is selected. There thus exists a $1$-min-additive $k$-selection mechanism on $G(1)$ for every $k\geq 2$.
	
	Our first result generalizes this mechanism to settings with arbitrary outdegrees, as long as the maximum number of selected vertices is large enough. To this end we show that when the maximum outdegree is $d$, to achieve impartiality, at most $d$ vertices have to be selected in addition to the one with maximum indegree and highest priority.\footnote{In this mechanism and wherever ties are broken in the rest of the paper, we break ties in favor of greater index, so $\text{top}(G)$ is the vertex with maximum indegree and highest priority in graph $G$. Naturally, any other deterministic tie-breaking rule could be used instead.} %%
	\begin{algorithm}[t]
		\SetAlgoNoLine
		\KwIn{Digraph $G=(V,E)\in \calG_n$}
		\KwOut{Set $S\subseteq V$ of selected vertices}
		Let $S=\emptyset$\;
		\For{$v\in V$}{
			Let $G_v = (V,E\setminus (\{v\}\times V))$\;
			\If{$\ttop(G_v)=v$}{
				$S\xleftarrow{} S\cup \{v\}$
			}
		}
		{\bf Return} $S$
		\caption{Asymmetric plurality with runners-up $\calP(G)$}
		\label{alg:PWRU-order}
	\end{algorithm}
	We formally describe the resulting mechanism in \autoref{alg:PWRU-order}, and will refer to it as \textit{asymmetric plurality with runners-up} and denote its output on graph $G$ by $\calP(G)$. We obtain the following theorem, which generalizes the known result for the outdegree-one case.
	\begin{theorem}
		\label{thm:ub-1}
		For every $n\in \NN,\ d\in[n-1]$, and $k\in\{d+1,\ldots,n\}$, there exists an impartial and $1$-min-additive $k$-selection mechanism on $\calG_n(d)$.
	\end{theorem}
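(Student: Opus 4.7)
The plan is to show that the mechanism $\calP$ of \autoref{alg:PWRU-order} is impartial, $1$-min-additive, and always selects at most $d+1$ vertices on $\calG_n(d)$; the theorem then follows for every $k\geq d+1$. Impartiality is immediate, since whether $v\in\calP(G)$ depends only on $G_v$, which is invariant under changes to the outgoing edges of $v$. For $1$-min-additivity, fix any $v\in\calP(G)$; the defining condition $\ttop(G_v)=v$ gives $\delta^-(v,G)=\delta^-(v,G_v)=\Delta(G_v)$, and because deleting the outgoing edges of $v$ decreases the indegree of any given vertex by at most one, $\Delta(G_v)\geq\Delta(G)-1$; combining the two inequalities yields $\delta^-(v,G)\geq\Delta(G)-1$.

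The heart of the argument is the cardinality bound $|\calP(G)|\leq d+1$. Write $\Delta=\Delta(G)$, $T=\{u\in V:\delta^-(u,G)=\Delta\}$, and $w^*=\ttop(G)=\max T$. First I will check that $w^*\in\calP(G)$: removing the outgoing edges of $w^*$ preserves $\delta^-(w^*,\cdot)=\Delta$ and can only decrease other indegrees, so $w^*$ still has maximum indegree in $G_{w^*}$, and any competing vertex of indegree $\Delta$ in $G_{w^*}$ already had indegree $\Delta$ in $G$ and hence has index at most $w^*$. Next, for each $v\in\calP(G)\setminus\{w^*\}$ I claim $(v,w^*)\in E$: otherwise $\delta^-(w^*,G_v)=\Delta$, so $\Delta(G_v)=\Delta$ and the tie-breaking rule forces $\ttop(G_v)\geq w^*$; together with $\ttop(G_v)=v\neq w^*$ this gives $v>w^*$ and $\delta^-(v,G)=\Delta$, contradicting $w^*=\max T$. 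Combined with the min-additivity step, this splits $\calP(G)\setminus\{w^*\}$ into
\[
A=\{v\in\calP(G)\setminus\{w^*\}:\delta^-(v,G)=\Delta\}\subseteq T\setminus\{w^*\}
\]
and $B=\{v\in\calP(G)\setminus\{w^*\}:\delta^-(v,G)=\Delta-1\}$, which is disjoint from $T$.

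To finish the proof I will show $|A|+|B|\leq d$ by exploiting the outdegree bound at a carefully chosen minimal selected vertex. Suppose first that $B\neq\emptyset$ and let $b_1=\min B$; since $\ttop(G_{b_1})=b_1$ with $\delta^-(b_1,G_{b_1})=\Delta-1$, the maximum indegree in $G_{b_1}$ is $\Delta-1$, so every $u\in T$ must satisfy $\delta^-(u,G_{b_1})\leq\Delta-1$ and hence $(b_1,u)\in E$, and every $b'\in B$ with $b'>b_1$ must satisfy $\delta^-(b',G_{b_1})<\Delta-1$ (to avoid breaking the tie above $b_1$) and hence $(b_1,b')\in E$. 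As $T$ and $B\setminus\{b_1\}$ are disjoint, this gives $|T|+|B|-1\leq\delta^+(b_1)\leq d$, which together with $|A|\leq|T|-1$ yields $|A|+|B|\leq d$. When $B=\emptyset$ and $A\neq\emptyset$, an analogous argument at $a_1=\min A$ (where $\Delta(G_{a_1})=\Delta$) forces $\{w^*\}\cup(A\setminus\{a_1\})\subseteq N^+(a_1)$, so $|A|\leq d$. The main obstacle is precisely this last step: selecting the right minimal vertex among the chosen ones so that its bounded outdegree simultaneously rules out too many elements in both $A$ and $B$.
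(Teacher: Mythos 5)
Your proposal is correct and follows essentially the same route as the paper: impartiality from the fact that selection of $v$ depends only on $G_v$, $1$-min-additivity from $\delta^-(v,G)=\Delta(G_v)\geq\Delta(G)-1$ together with $\ttop(G)\in\calP(G)$, and the bound $|\calP(G)|\leq d+1$ by observing that the (lexicographically) least selected vertex must have outgoing edges to all other selected vertices, so its outdegree caps their number at $d$. The paper packages these facts into a single characterization lemma (Lemma~\ref{lem:char-pwru}) and avoids your case split on whether the minimal selected vertex has indegree $\Delta$ or $\Delta-1$, but the underlying argument is the same.
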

	
	We will be interested in the following in comparing vertices both according to their indegree and to their index, and we will use regular inequality symbols, as well as the operators $\max$ and $\min$, to denote the lexicographic order among pairs of the form $(\delta^-(v),v)$.
	The following lemma characterizes the structure of the set of vertices selected by \autoref{alg:PWRU-order}, and provides the main technical ingredient to the proof of \autoref{thm:ub-1}.
	\begin{lemma}
		\label{lem:char-pwru}
		Let $G=(V,E)\in \calG$
		and $v\in V$. Then, $v\in \calP(G)$ if and only if
		\begin{enumerate}[label=(\alph*)]	
			\item for every $w\in V$ with $(\delta^-(w),w)>(\delta^-(v),v)$ it holds $(v,w)\in E$; and \label{item:outgoing-edges}
			\item one of the following holds:\label{item:indegree}
			\begin{enumerate}[label=(\roman*)]	
				\item $\delta^-(v)=\Delta(G)$; or\label{alt-max-indegree}
				\item $\delta^-(v)=\Delta(G)-1$ and $v>w$ for every $w\in V$ with $\delta^-(w)=\Delta(G)$.\label{alt-max-indegree-minus-1}
			\end{enumerate}
		\end{enumerate}
	\end{lemma}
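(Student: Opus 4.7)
The plan is to unwind the definition of the mechanism and translate the equivalent condition $v=\ttop(G_v)$ directly into the stated structural conditions on $G$. I begin by recording the elementary bookkeeping: since $G_v$ differs from $G$ only in the outgoing edges of $v$, one has $\delta^-(v,G_v)=\delta^-(v,G)$ and, for every $w\neq v$, $\delta^-(w,G_v)=\delta^-(w,G)$ when $(v,w)\notin E$, and $\delta^-(w,G_v)=\delta^-(w,G)-1$ otherwise. Using the paper's convention of comparing pairs $(\delta^-(u),u)$ lexicographically, $v=\ttop(G_v)$ is equivalent to $(\delta^-(v,G_v),v) > (\delta^-(w,G_v),w)$ for every $w\in V\setminus\{v\}$. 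The rest of the proof is a case analysis on the relative lex position of $w$ and $v$.

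For the forward direction, suppose $v\in\calP(G)$. Condition (a) follows immediately: if some $w$ with $(\delta^-(w),w)>(\delta^-(v),v)$ had $(v,w)\notin E$, then $(\delta^-(w,G_v),w)=(\delta^-(w),w)>(\delta^-(v),v)=(\delta^-(v,G_v),v)$, contradicting $v=\ttop(G_v)$. For (b), I first rule out $\delta^-(v)\leq\Delta(G)-2$: any $w$ with $\delta^-(w)=\Delta(G)$ would satisfy $\delta^-(w,G_v)\geq\Delta(G)-1>\delta^-(v,G_v)$, again a contradiction. This leaves $\delta^-(v)\in\{\Delta(G),\Delta(G)-1\}$; the first option is (i). In the second, every $w$ with $\delta^-(w)=\Delta(G)$ exceeds $v$ in lex order, so (a) forces $(v,w)\in E$ and hence $\delta^-(w,G_v)=\delta^-(v,G_v)$, so the strict lex inequality demands $v>w$, which is exactly (ii).

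For the backward direction, I assume (a) and (b) and verify the strict lex inequality against each $w\neq v$. If $(\delta^-(w),w)>(\delta^-(v),v)$, then (a) gives $(v,w)\in E$, so $\delta^-(w,G_v)=\delta^-(w)-1$; condition (b) caps $\delta^-(w)\leq\Delta(G)\leq\delta^-(v)+1$, and when this bound is tight condition (ii) guarantees $v>w$, which together yield the required strict lex inequality. If $\delta^-(w)<\delta^-(v)$, then $\delta^-(w,G_v)\leq\delta^-(w)<\delta^-(v,G_v)$, and if $\delta^-(w)=\delta^-(v)$ with $w<v$ the bound $\delta^-(w,G_v)\leq\delta^-(w)=\delta^-(v,G_v)$ combined with $v>w$ closes the last case.

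The whole argument is essentially mechanical once the bookkeeping is in place. The step requiring most care is the tight regime $\delta^-(v)=\Delta(G)-1$, in which the removal of $v$'s outgoing edges drops the indegree of every previously maximum vertex down to exactly $\delta^-(v,G_v)$; the characterization then hinges on the tie-breaking rule discriminating in favour of $v$, which is precisely what (ii) encodes.
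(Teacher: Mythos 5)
Your proof is correct and follows essentially the same route as the paper's: both directions unwind the condition $v=\ttop(G_v)$ via the same bookkeeping on how deleting $v$'s outgoing edges changes indegrees, with the same case analysis on the lexicographic comparison (including the tight case $\delta^-(v)=\Delta(G)-1$ resolved by the tie-breaking clause (ii)). No gaps; the minor differences (e.g., ruling out $\delta^-(v)\leq\Delta(G)-2$ directly rather than via the paper's displayed inequality chain) are purely presentational.
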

	\begin{proof}
		We first show that, if $v\in \calP(G)$ for a given graph $G$, then~\ref{item:outgoing-edges} and~\ref{item:indegree} follow.
		Let $G=(V,E)\in \calG$, and let $v\in \calP(G)$. 
		To see~\ref{item:outgoing-edges}, suppose there is $w\in V$ with $(\delta^-(w,G),w)>(\delta^-(v,G),v)$.
		Since $v\in \calP(G)$, we have $v=\ttop(G_v)$ with $G_v = (V,E\setminus (\{v\}\times V))$. 
		This implies $(\delta^-(v,G_v),v)>(\delta^-(w,G_v),w)$ and therefore $\delta^-(w,G)>\delta^-(w,G_v)$, because $\delta^-(v,G)=\delta^-(v,G_v)$.
		Since $G$ and $G_v$ only differ in the outgoing edges of $v$, we conclude that $(v,w)\in E$.
		To prove~\ref{item:indegree}, we note that for every $w\in V$ we have
		\begin{equation}
			(\delta^-(v,G),v)=(\delta^-(v,G_v),v)>(\delta^-(w,G_v),w)\geq (\delta^-(w,G)-1,w),\label{eq:indegrees-pwru}
		\end{equation}
		where the last inequality comes from the fact that each vertex has at most one incoming edge from $v$.
		If there is no $w\in V\setminus \{v\}$ with $\delta^-(w)=\Delta(G)$, the maximum indegree must be that of $v$, so $\delta^-(v)=\Delta(G)$ and~\ref{alt-max-indegree} follows. Otherwise, for each $w\in V\setminus \{v\}$ with $\delta^-(w)=\Delta(G)$,~\autoref{eq:indegrees-pwru} yields $(\delta^-(v,G),v)>(\Delta(G)-1,w)$.
		We conclude that either $\delta^-(v,G)>\Delta(G)-1$, in which case~\ref{alt-max-indegree} holds, or both $\delta^-(v)=\Delta(G)-1$ and $v>w$, which implies~\ref{alt-max-indegree-minus-1}.
		
		We now prove the other direction. Let $G=(V,E)\in \calG$ and $v\in V$ such that both~\ref{item:outgoing-edges} and~\ref{item:indegree} hold.
		Let $G_v=(V,E\setminus (\{v\}\times V))$.
		We have to show that $\ttop(G_v)=v$, \ie that for every $w\in V\setminus \{v\},\ (\delta^-(v, G_v), v)>(\delta^-(w,G_v),w)$.
		Let $w$ be a vertex in $V\setminus \{v\}$.
		If $(\delta^-(v, G), v)>(\delta^-(w,G),w)$, we can conclude immediately since $\delta^-(v, G_v)=\delta^-(v, G)$ and $\delta^-(w,G_v)\leq \delta^-(w,G)$.
		Otherwise, we know from~\ref{item:outgoing-edges} that $(v,w)\in E$ and thus $\delta^-(w,G_v)= \delta^-(w,G)-1$.
		If $v$ satisfies~\ref{alt-max-indegree}, this yields
		\[
		\delta^-(v, G_v) = \delta^-(v, G) = \Delta(G) \geq \delta^-(w,G) = \delta^-(w,G_v)+1,
		\]
		so $(\delta^-(v, G_v), v)>(\delta^-(w,G_v),w)$.
		On the other hand, if $v$ satisfies~\ref{alt-max-indegree-minus-1}, then 
		\[
		\delta^-(v, G_v) = \delta^-(v, G) = \Delta(G)-1 \geq \delta^-(w,G)-1 = \delta^-(w,G_v),
		\]
		and $v>w$ implies $(\delta^-(v, G_v), v)>(\delta^-(w,G_v),w)$ as well.
	\end{proof}
	
	Observe that \autoref{lem:char-pwru} implies in particular that $\ttop(G)\in \calP(G)$ for every graph $G$.
	\autoref{fig:example-pwru} provides an example of the characterization given by \autoref{lem:char-pwru}, in terms of indegrees, tie-breaking order, and edges among selected vertices.
	\begin{figure}[t]
		\centering
		\begin{tikzpicture}
			\Vertex[x=3, y=2, Math, shape=circle, color=white, size=.3, label=4, fontscale=1.2, position=above, distance=-.04cm]{A}
			\Vertex[x=6, y=2, Math, shape=circle, color=black, size=.3, label=2, fontscale=1.2, position=above, distance=-.04cm]{B}
			\Vertex[x=7.5, y=2, Math, shape=circle, color=white, size=.3, label=1, fontscale=1.2, position=above, distance=-.04cm]{C}
			\Vertex[Math, shape=circle, color=black, size=.3, label=6, fontscale=1.2, position=below, distance=-.04cm]{D}
			\Vertex[x=1.5, Math, shape=circle, color=white, size=.3, label=5, fontscale=1.2, position=below, distance=-.04cm]{E}
			\Vertex[x=4.5, Math, shape=circle, color=black, size=.3, label=3, fontscale=1.2, position=below, distance=-.04cm]{F}
			\Edge[Direct, color=black, lw=1pt, bend=-28](C)(A)
			\Edge[Direct, color=black, lw=1pt](C)(B)
			\Edge[Direct, color=black, lw=1pt, bend=10](D)(A)
			\Edge[Direct, color=black, lw=1pt, bend=10](D)(B)
			\Edge[Direct, color=black, lw=1pt](E)(A)
			\Edge[Direct, color=black, lw=1pt, bend=-10](E)(B)
			\Edge[Direct, color=black, lw=1pt, bend=-10](E)(C)
			\Edge[Direct, color=black, lw=1pt](E)(D)
			\Edge[Direct, color=black, lw=1pt](F)(A)
			\Edge[Direct, color=black, lw=1pt, bend=-10](F)(B)
			\Edge[Direct, color=black, lw=1pt, bend=-20](F)(C)
			\Edge[Direct, color=black, lw=1pt, bend=28](F)(D)
			\Edge[Direct, color=black, lw=1pt, bend=10](F)(E)
			\Text[x=-1, y=2, fontsize=\small]{$\Delta$};
			\Text[x=-1, fontsize=\small]{$\Delta-1$};
		\end{tikzpicture}
		\caption{Example of a set of vertices selected by \autoref{alg:PWRU-order}. In this illustration and throughout the paper, vertices are arranged vertically according to indegree and horizontally according to index, so that vertices on the left are favored in case of ties. The vertices selected by the mechanism are drawn in white, those not selected in black. Vertices with indegree below $\Delta-1$, as well as edges incident to such vertices, are not shown. Denoting the graph as $G=(V,E)$, and letting $G_v = (V,E\setminus (\{v\}\times V))$ for each vertex $v$, the selected vertices $v$ are those for which $\ttop(G_v)=v$. Specifically, vertices $2$, $3$, and $6$ are not selected because $\ttop(G_2)=4$, $\ttop(G_3)=4$, and $\ttop(G_6)=1$.}
		\label{fig:example-pwru}
	\end{figure}
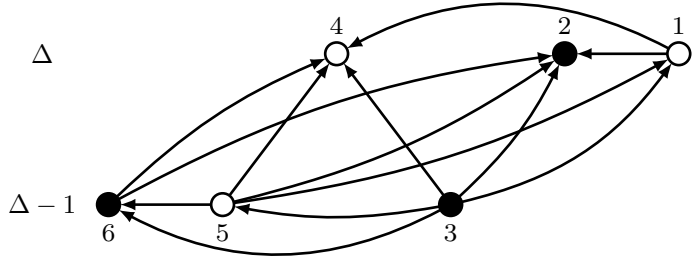
	
	We are now ready to prove \autoref{thm:ub-1}.
	\begin{proof}[Proof of \autoref{thm:ub-1}]
		We show that for every $n\in \NN$ and $d\in [n-1]$, asymmetric plurality with runners-up is impartial and $1$-min-additive on $\calG_n(d)$, and that for every $G=(V,E)\in \calG_n(d)$, it selects at most $d+1$ vertices. If this is the case, then for every $k\in \{d+1,\ldots,n\}$ the mechanism would satisfy the statement of the theorem. Therefore, let $n$ and $d$ be as mentioned.
		
		Impartiality follows from the definition of the mechanism, because the outgoing edges of a vertex are not taken into account when deciding whether the vertex is taking part on the selected set or not. If we let $G=(V,E),\ v\in V$, and $G'=(V,E')\in \calN_v(G)$, then the graphs $G_v$ and $G'_v$ constructed when running the mechanism with each of these graphs $G$ and $G'$ as an input, respectively, are the same because by definition of $\calN_v(G)$ we have $E\setminus(\{v\}\times V)=E'\setminus(\{v\}\times V)$. Since $v\in \calP(G)\Leftrightarrow \ttop(G_v)=v$, and $v\in \calP(G')\Leftrightarrow \ttop(G'_v)=v$, we conclude $v\in \calP(G) \Leftrightarrow v\in \calP(G')$.
		
		To see that the mechanism is $1$-min-additive, let $G\in \calG_n(d)$ and first note that $\calP(G)\neq\emptyset$ since \autoref{lem:char-pwru} implies that $\ttop(G)\in \calP(G)$. From this lemma we also know that for every $v\in \calP(G),\ \delta^-(v)\geq \Delta(G)-1$. We conclude that $\min \{\{\delta^-(v)\}_{v\in \calP(G)}\} \geq \Delta(G)-1$, and since this holds for every $G\in \calG_n(d)$, the mechanism is $1$-min-additive.
		
		Finally, let $G=(V,E)\in \calG_n(d)$, and suppose that $|\calP(G)|>d+1$. 
		If we denote $v_L=\text{argmin}_{v\in \calP(G)}\{(\delta^-(v),v)\}$, from \autoref{lem:char-pwru} we know that $(v_L,w)\in E$ for every $w\in V$ with $(\delta^-(w),w)>(\delta^-(v_L),v_L)$, thus $\delta^+(v_L)\geq |\calP(G)|-1 > d$, a contradiction. We conclude that $|\calP(G)|\leq d+1$.
	\end{proof}
	
	The following result, concerning mechanisms that may select an arbitrary number of vertices, follows immediately from \autoref{thm:ub-1}.
	\begin{corollary}
		For every $n\in\NN$, there exists an impartial and $1$-min-additive $n$-selection mechanism on $\calG_n$.
	\end{corollary}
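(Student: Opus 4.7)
The plan is to obtain this corollary as an immediate specialization of \autoref{thm:ub-1}. The key observation is that, since graphs in $\calG_n$ have no self-loops, the out-neighborhood of any vertex is contained in the set of $n-1$ remaining vertices, so every such graph has maximum outdegree at most $n-1$. Consequently $\calG_n = \calG_n(n-1)$, which reduces the unbounded-outdegree case to a boundary instance of the bounded-outdegree setting already handled.

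Given this identification, I would simply invoke \autoref{thm:ub-1} with the parameters $d = n-1$ and $k = n$. The hypotheses $d \in [n-1]$ and $k \in \{d+1,\dots,n\} = \{n\}$ are both satisfied, so the theorem yields an impartial and $1$-min-additive $n$-selection mechanism on $\calG_n(n-1)$; by the first paragraph this is precisely an impartial and $1$-min-additive $n$-selection mechanism on $\calG_n$, as required.

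There is no real obstacle in this argument; the content lives entirely in \autoref{thm:ub-1}, and the corollary is essentially a bookkeeping observation about the parameter ranges. The only point to be careful about is not to conflate ``$n$-selection mechanism'' with ``mechanism that always selects exactly $n$ vertices'': the definition in the preliminaries only bounds $|f(G)|$ by $k$ from above, so the bound $|\calP(G)| \leq d+1 = n$ established in the proof of \autoref{thm:ub-1} directly gives the required property. I would therefore write the proof in two or three lines, explicitly naming the instantiation $d = n-1$, $k = n$.
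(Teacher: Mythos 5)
Your proposal is correct and is exactly the argument the paper intends: the corollary is stated as following immediately from \autoref{thm:ub-1}, and the implicit instantiation is precisely your observation that loop-free graphs satisfy $\calG_n=\calG_n(n-1)$, so one takes $d=n-1$ and $k=n$. The only (trivial) caveat is the boundary case $n=1$, where $[n-1]=\emptyset$ and the theorem cannot be invoked as stated, but there the single-vertex graph makes the claim immediate, e.g.\ by always selecting vertex $1$.
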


	On $\calG_n$, \ie in the case of unbounded outdegrees, this result can in fact be improved slightly to guarantee $1$-min-additivity while selecting only at most $n-1$ vertices. The improvement is achieved by a more intricate version of asymmetric plurality with runners-up, which we call \textit{asymmetric plurality with runners-up and pivotal vertices}. We formally describe this mechanism in \autoref{alg:PWRU-pivotal} and denote its output for graph $G$ by $\calP^P(G)$. Given a graph $G=(V,E)$, call a vertex $u\in \calP(G)$ \textit{pivotal} for $v\in\calP(G)$ if there exists a graph $G_{uv}\in \calN_u(G)$ such that $v\notin \calP(G_{uv})$, \ie if the outgoing edges of $u$ can be changed in such a way that $v$ is no longer selected by asymmetric plurality with runners-up. Asymmetric plurality with runners-up and pivotal vertices then selects every vertex in $\calP(G)$ that is pivotal for every other vertex in $\calP(G)$. The mechanism turns out to inherit impartiality and $1$-min-additivity, and to never select all vertices.
	\begin{algorithm}[t]
		\SetAlgoNoLine
		\KwIn{Digraph $G=(V,E)\in \calG_n$}
		\KwOut{Set $S\subseteq V$ of selected vertices with $|S|\leq n-1$}
		Let $S\xleftarrow{} \emptyset$\;
		\For{$u\in \calP(G)$}{
			\If{for every $v\in \calP(G)\setminus \{u\}$ there exists $G_{uv}\in \calN_u(G)$ such that $v\notin \calP(G_{uv})$}{
				$S\xleftarrow{} S\cup \{u\}$
			}
		}
		{\bf Return} $S$
		\caption{Asymmetric plurality with runners-up and pivotal vertices $\mathsf{\calP^P(G)}$}
		\label{alg:PWRU-pivotal}
	\end{algorithm}
	
	\begin{theorem}\label{thm:ub-1-n-1}
		For every $n\in \NN$ and $k\in \{n-1,n\}$, there exists an impartial and $1$-min-additive $k$-selection mechanism on $\calG_n$.
	\end{theorem}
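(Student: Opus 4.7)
The plan is to verify three properties of $\calP^P$: (a) impartiality, (b) that every selected vertex has indegree at least $\Delta(G)-1$, and (c) that $1\leq |\calP^P(G)|\leq n-1$. Together with $\calP$ from \autoref{thm:ub-1} (which handles $k=n$), these yield the theorem; I focus on $n\geq 2$ throughout.

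For impartiality I would combine two ingredients. First, $\calN_u(G)$ depends only on the non-$u$-outgoing edges of $G$, so $\calN_u(G)=\calN_u(G')$ for any $G'\in\calN_u(G)$. Second, $\calP$ is impartial by the first half of the proof of \autoref{thm:ub-1}, so $u\in\calP(G)\Leftrightarrow u\in\calP(G')$, and I may assume both sides hold. To align the pivotality quantifier on $G$ and on $G'$, I would observe that if $v\in\calP(G)\setminus\calP(G')$ then $G'$ itself is a graph in $\calN_u(G)$ witnessing $v\notin\calP(H)$, so $v$'s pivotality requirement for $u$ in $G$ is automatic; symmetrically if $v\in\calP(G')\setminus\calP(G)$. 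Together with $\calN_u(G)=\calN_u(G')$ this gives $u\in\calP^P(G)\Leftrightarrow u\in\calP^P(G')$.

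Property (b) is immediate from $\calP^P(G)\subseteq\calP(G)$ together with \autoref{thm:ub-1}, and reduces the $1$-min-additivity statement to $\calP^P(G)\neq\emptyset$. For both the nonemptiness and the size bound in (c) I would route the argument through the relation $\succ$ on $\calP(G)$ defined by $u\succ v$ iff $v$ is a \emph{permanent witness for $u$}, i.e., $v\in\calP(H)$ for every $H\in\calN_u(G)$; by construction $\calP^P(G)$ is exactly the set of $\succ$-sinks, so nonemptiness amounts to the existence of a sink and $|\calP^P(G)|\leq n-1$ to the existence of at least one $\succ$-edge. I would obtain a sink by exhibiting a monotone quantity that strictly increases along $\succ$-edges: condition~\ref{item:outgoing-edges} of \autoref{lem:char-pwru} forces a permanent witness $v$ to have outgoing edges to every vertex that can become strictly higher-lex than $v$ under some modification of $u$'s out-edges, which suggests the lexicographic pair $(\delta^+(v),v)$ strictly dominates $(\delta^+(u),u)$ whenever $u\succ v$. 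For the size bound I would split into cases: if $\calP(G)\neq V$ any vertex outside $\calP(G)$ already lies outside $\calP^P(G)$; if $\calP(G)=V$, \autoref{lem:char-pwru} forces $\Delta(G)=n-1$ and a rigid edge structure in which every vertex has outgoing edges to all strictly-higher-lex vertices, and within this structure I would identify an explicit pair $(u,v)$ for which no choice of $u$'s outgoing edges dislodges $v$ from $\calP$.

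The main obstacle is verifying the monotonicity of $(\delta^+(v),v)$ along $\succ$, because this requires producing, from any hypothetical pair $u\succ v$ with $(\delta^+(v),v)\leq(\delta^+(u),u)$, an explicit $H\in\calN_u(G)$ in which condition~\ref{item:outgoing-edges} or condition~\ref{item:indegree} of \autoref{lem:char-pwru} fails for $v$; the tie-breaking in~\ref{alt-max-indegree-minus-1} forces separate arguments for permanent witnesses of indegree $\Delta(G)$ and of $\Delta(G)-1$. The explicit construction in the $\calP(G)=V$ branch of the size bound likewise relies on unpacking the same lex-rank constraints.
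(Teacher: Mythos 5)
Your impartiality argument is sound and is essentially the paper's: $\calN_u(G)=\calN_u(G')$, impartiality of $\calP$ gives $u\in\calP(G)\Leftrightarrow u\in\calP(G')$, for $v\in\calP(G')\setminus\calP(G)$ the graph $G$ itself serves as the required witness, and witnesses for $v\in\calP(G)\cap\calP(G')$ transfer verbatim. The reduction of $1$-min-additivity to $\calP^P(G)\neq\emptyset$ and the easy half of the size bound (when $\calP(G)\neq V$) are also fine. The gap is that the two steps you defer are exactly the substance of the theorem, and the key lemma you propose for nonemptiness is false: it is not true that $u\succ v$ forces $(\delta^+(v),v)>(\delta^+(u),u)$. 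Take $n=5$ and $E=\{(4,5),(5,4),(1,4),(1,5),(4,2),(4,3)\}$, so that $\delta^-(4)=\delta^-(5)=2=\Delta$, $\delta^-(2)=\delta^-(3)=1$, and $\calP(G)=\{4,5\}$. Vertex $5$ is a permanent witness for $4$: in any $H\in\calN_4(G)$ the vertices $1,2,3$ have indegree at most $1$, so the only vertex that can be lexicographically above $5$ is $4$, to which $5$ points; and when $4$ drops the edge $(4,5)$, so that $\delta^-(5,H)=\Delta(H)-1$, the tie-breaking alternative (ii) of condition (b) in \autoref{lem:char-pwru} favours $5$ as the vertex of largest index. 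Hence $4\succ 5$, yet $(\delta^+(5),5)=(1,5)<(3,4)=(\delta^+(4),4)$: your potential strictly decreases along this $\succ$-edge. (The theorem is unaffected: $5$ is still a $\succ$-sink here, since deleting the edge $(5,4)$ dislodges $4$.) So the monotonicity you flag as the ``main obstacle'' is not merely unproven but wrong as stated, and the route to $\calP^P(G)\neq\emptyset$ collapses with it; likewise, the $\calP(G)=V$ branch of the size bound is only a plan.

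For comparison, the paper proves nonemptiness constructively rather than via a potential: if $v_H=\ttop(G)$ has an edge to every other selected vertex, then deleting all outgoing edges of $v_H$ removes every other vertex of $\calP(G)$ from the selection, so $v_H\in\calP^P(G)$; otherwise the lexicographically minimal selected vertex $v_L$ is pivotal for everyone, using two deviations (pointing only to $v_H$ eliminates all of $\calP(G)\setminus\{v_L,v_H\}$, and dropping the edge $(v_L,v_H)$ eliminates $v_H$ because some selected $\bar v$ with $(v_H,\bar v)\notin E$ overtakes it, violating condition (a) of \autoref{lem:char-pwru} for $v_H$). For the size bound when $\calP(G)=V$, the paper argues that $\Delta(G)=n-1$ and splits again: in the complete graph only vertex $n$ can be pivotal for all others, and otherwise any $u$ with $\delta^+(u)\leq n-2$ has $\bar v=\max\{V\setminus N^+(u)\}$ as a permanent witness, shown by a case analysis deriving a contradiction from any hypothetical $H\in\calN_u(G)$ with $\bar v\notin\calP(H)$. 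If you want to keep your $\succ$-sink framing, you would need to replace the outdegree-based potential by arguments of this constructive type; some monotone quantity may exist, but neither $(\delta^+(v),v)$ nor the index alone works, and exhibiting one is the actual content of the proof.
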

	
	\begin{proof}
		We show that for every $n\in \NN$, asymmetric plurality with runners-up and pivotal vertices is impartial and $1$-min-additive on $\calG_n$ and that for every $G=(V,E)\in \calG_n$, it selects at most $n-1$ vertices. Let $n\in \NN$ be an arbitrary value.
		
		To see that the mechanism is impartial, let $G=(V,E)\in \calG_n,\ u\in \calP^P(G)$, and $G'=(V,E')\in \calN_u(G)$.
		We show in the following that $u\in \calP^P(G')$, and since the graphs $G$ and $G'$ are chosen arbitrarily, their roles can be inverted and this is enough to conclude that the mechanism is impartial.
		We first note that $u\in \calP(G)$ because $\calP^P(G)\subseteq \calP(G)$, thus impartiality of asymmetric plurality with runners-up proven in \autoref{thm:ub-1} implies $u\in \calP(G')$.
		If $\calP(G')=\{u\}$, then the condition in the mechanism holds trivially for this vertex, so $u\in \calP^P(G')$ and we conclude.
		Otherwise, let $v\in \calP(G')\setminus\{u\}$ be an arbitrary vertex selected by asymmetric plurality with runners-up other than $u$.
		Since $u\in \calP^P(G)$, we have that either (a)~$v\notin \calP(G)$, or (b)~$v\in \calP(G)$ and there exists $G_{uv} = (V, E_{uv}) \in \calN_u(G)$ such that $v\notin \calP(G_{uv})$.
		If (a)~holds, taking $G'_{uv}=G$, which belongs to $\calN_u(G')$ because of the assumption that $G' \in \calN_u(G)$, we have that $v\notin \calP(G'_{uv})$.
		If (b)~holds, taking $G'_{uv}=G_{uv}$, which belongs to $\calN_u(G')$ since $\calN_u(G')=\calN_u(G)$, we have that $v\notin \calP(G'_{uv})$. 
		In either case, we conclude that there exists $G'_{uv}\in \calN_u(G')$ such that $v\notin \calP(G'_{uv})$.
		Since this argument is valid for every $v\in \calP(G')\setminus\{u\}$, we conclude that $u\in \calP^P(G')$.
		
		To see that the mechanism is $1$-min-additive, it is enough to show that it always selects a vertex, since for every $G\in \calG$ it selects a subset of $\calP(G)$ and from \autoref{thm:ub-1} we know that this set contains vertices with indegrees in $\{\Delta(G), \Delta(G)-1\}$.
		To this purpose we let $G=(V,E)\in \calG_n$ and introduce some additional notation. 
		Let $S^i=\{v\in \calP(G):\delta^-(v)=\Delta(G)-i\}$ and $n^i=|S^i|$ for $i\in \{0,1\}$, and denote
		\begin{align*}
			v_H&=\text{argmax}_{v\in \calP(G)}\{(\delta^-(v,G), v)\}=\ttop(G),\\
			v_L&=\text{argmin}_{v\in \calP(G)}\{(\delta^-(v,G), v)\}.
		\end{align*}
		From \autoref{lem:char-pwru}, we know that $\calP(G)= S^0\cup S^1$, that $(v_L,v)\in E$ for every $v\in \calP(G)\setminus\{v_L\}$, and that $u>v$ for each $u\in S^1,\ v\in S^0$.
		We now distinguish two cases according to the edges between vertices in $\calP(G)$.
		
		If $(v_H,v)\in E$ for every $v\in\calP(G)\setminus \{v_H\}$, then we claim that defining $G'=(V,E\setminus (\{v_H\}\times V))\in \calN_{v_H}(G)$ it holds $v\notin \calP(G')$ for every $v\in \calP(G)\setminus \{v_H\}$. 
		If this is true, it is clear that $v_H\in \calP^P(G)$ and thus $\calP^P(G)\neq\emptyset$. We now prove the claim. 
		First, note that $v_H\in \calP(G')$ since $v_H=\ttop(G')$ and \autoref{lem:char-pwru} ensures $\ttop(G')\in \calP(G')$. 
		This comes from the fact that $\delta^-(v_H,G')=\delta^-(v_H,G)$ and $\delta^-(v,G')\leq \delta^-(v,G)$ for every $v\in V\setminus \{v_H\}$, together with $v_H=\ttop(G)$. 
		Moreover, for every $v\in S^0\setminus \{v_H\}$ it holds $\delta^-(v,G')=\delta^-(v,G)-1 = \delta^-(v_H,G')-1= \Delta(G')-1$ and $v<v_H$, so condition~\ref{item:indegree} in \autoref{lem:char-pwru} does not hold for $v$ and thus $v\notin \calP(G')$.
		Analogously, for every $v\in S^1$ it holds $\delta^-(v,G')=\delta^-(v,G)-1 = \delta^-(v_H,G')-2= \Delta(G')-2$, so condition~\ref{item:indegree} in \autoref{lem:char-pwru} does not hold for $v$ either, and thus $v\notin \calP(G')$.
		This allows to conclude the claim and the fact that $\calP^P(G)$ is non-empty for this case.
		This argument is illustrated in \autoref{fig:thm-n-1-example1}.
		
		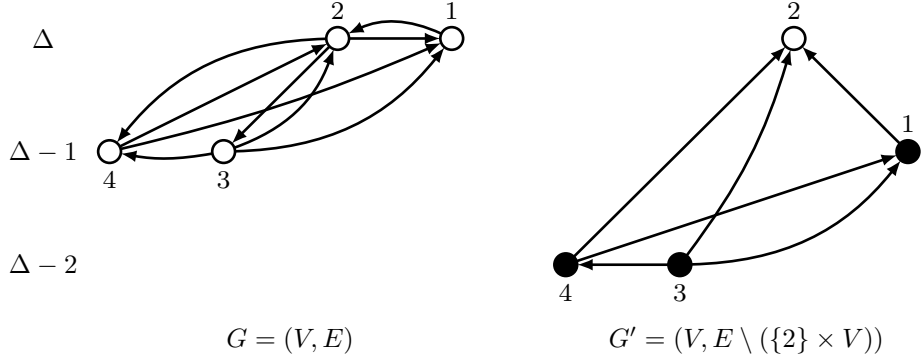
\begin{figure}[t]
			\centering
			\begin{tikzpicture}
				
				\Text[x=-1, y=3, fontsize=\small]{$\Delta$};
				\Text[x=-1, y=1.5, fontsize=\small]{$\Delta-1$};
				\Text[x=-1, fontsize=\small]{$\Delta-2$};
				\Text[x=2.25, y=-1, fontsize=\small]{$G=(V,E)$};
				\Text[x=8.25, y=-1, fontsize=\small]{$G'=(V, E\setminus (\{2\}\times V))$};
				
				\Vertex[x=3, y=3, Math, shape=circle, color=white, size=.3, label=2, fontscale=1.2, position=above, distance=-.04cm]{A}
				\Vertex[x=4.5, y=3, Math, shape=circle, color=white, size=.3, label=1, fontscale=1.2, position=above, distance=-.04cm]{B}
				\Vertex[y=1.5, Math, shape=circle, color=white, size=.3, label=4, fontscale=1.2, position=below, distance=-.04cm]{C}
				\Vertex[x=1.5, y=1.5, Math, shape=circle, color=white, size=.3, label=3, fontscale=1.2, position=below, distance=-.04cm]{D}
				\Edge[Direct, color=black, lw=1pt](A)(B)
				\Edge[Direct, color=black, lw=1pt, bend=-25](A)(C)
				\Edge[Direct, color=black, lw=1pt](A)(D)
				\Edge[Direct, color=black, lw=1pt, bend=-25](B)(A)
				\Edge[Direct, color=black, lw=1pt](C)(A)
				\Edge[Direct, color=black, lw=1pt, bend=-5](C)(B)
				\Edge[Direct, color=black, lw=1pt, bend=-25](D)(A)
				\Edge[Direct, color=black, lw=1pt, bend=-25](D)(B)
				\Edge[Direct, color=black, lw=1pt, bend=10](D)(C)
				
				\Vertex[x=9, y=3, Math, shape=circle, color=white, size=.3, label=2, fontscale=1.2, position=above, distance=-.04cm]{A}
				\Vertex[x=10.5, y=1.5, Math, shape=circle, color=black, size=.3, label=1, fontscale=1.2, position=above, distance=-.04cm]{B}
				\Vertex[x=6, Math, shape=circle, color=black, size=.3, label=4, fontscale=1.2, position=below, distance=-.04cm]{C}
				\Vertex[x=7.5, Math, shape=circle, color=black, size=.3, label=3, fontscale=1.2, position=below, distance=-.04cm]{D}
				\Edge[Direct, color=black, lw=1pt](B)(A)
				\Edge[Direct, color=black, lw=1pt](C)(A)
				\Edge[Direct, color=black, lw=1pt](C)(B)
				\Edge[Direct, color=black, lw=1pt, bend=-10](D)(A)
				\Edge[Direct, color=black, lw=1pt, bend=-25](D)(B)
				\Edge[Direct, color=black, lw=1pt](D)(C)
				
			\end{tikzpicture}
			\caption{Illustration of the fact that the set of vertices selected by \autoref{alg:PWRU-pivotal} is non-empty if $(v_H,v)\in E$ for every $v\in \calP(G)\setminus \{v_H\}$. Vertices selected by asymmetric plurality with runners-up are drawn in white. Denoting the graph on the left as $G=(V,E)$, where $v_H=2$, and defining $G'=(V,E\setminus (\{2\}\times V))\in \calN_2(G)$, we have that $\{1,3,4\}\cap \calP(G')=\emptyset$, and thus $2\in \calP^P(G)$.}
			\label{fig:thm-n-1-example1}
		\end{figure}
		
		Now we consider the case where there is a vertex $\bar{v}\in \calP(G)$ such that $(v_H, \bar{v})\notin E$, and we claim that defining $G'=(V,(E\setminus (\{v_L\times V\}))\cup (v_L,v_H))\in \calN_{v_L}(G)$ it holds $v\notin \calP(G')$ for every $v\in \calP(G)\setminus \{v_L, v_H\}$, whereas defining $G''=(V,E\setminus (v_L,v_H))\in \calN_{v_L}(G)$ it holds $v_H\notin \calP(G'')$. 
		If this is true, then $v_L\in \calP^P(G)$ and $\calP^P(G)\neq\emptyset$. 
		We now prove the claim.
		First, note that $v_H\in \calP(G')$ for the same reason as before, since $\delta^-(v_H,G')=\delta^-(v_H,G)$ and $\delta^-(v,G')\leq \delta^-(v,G)$ for every $v\in V\setminus \{v_H\}$. 
		Moreover, for every $v\in S^0\setminus \{v_H\}$ it holds $\delta^-(v,G')=\delta^-(v,G)-1 = \delta^-(v_H,G')-1= \Delta(G')-1$ and $v<v_H$, so condition~\ref{item:indegree} in \autoref{lem:char-pwru} does not hold for $v$ and thus $v\notin \calP(G')$.
		Analogously, for every $v\in S^1\setminus \{v_L\}$ it holds $\delta^-(v,G')=\delta^-(v,G)-1 = \delta^-(v_H,G')-2= \Delta(G')-2$ so condition~\ref{item:indegree} in \autoref{lem:char-pwru} does not hold for $v$ and thus $v\notin \calP(G')$.
		This allows to conclude the claim for $G'$. 
		In the case of $G''$, we can write the following chain of inequalities,
		\[
		(\delta^-(\bar{v},G''),\bar{v})=(\delta^-(\bar{v},G), \bar{v}) > (\delta^-(v_H,G)-1, v_H) = (\delta^-(v_H,G''), v_H),
		\]
		where the equalities hold because of the definition of $G''$ and the inequality by condition~\ref{item:indegree} in \autoref{lem:char-pwru}, given that $\bar{v}\in \calP(G)$.
		Since $(v_H,\bar{v})\notin E$, we conclude from condition~\ref{item:outgoing-edges} in \autoref{lem:char-pwru} that $v_H\notin \calP(G'')$, and therefore the claim for $G''$ follows.
		This argument is illustrated in \autoref{fig:thm-n-1-example2}.
		
		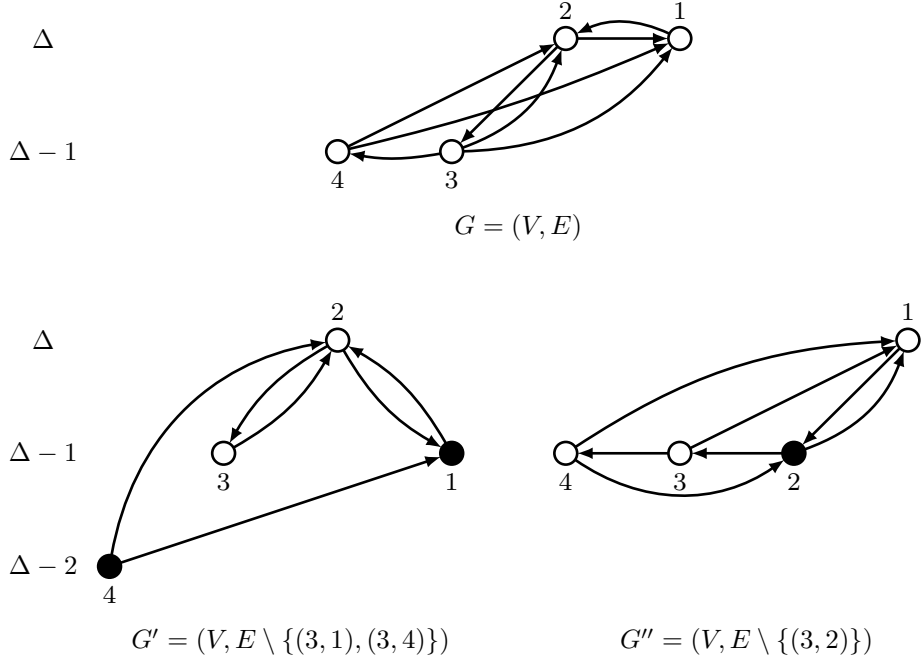
\begin{figure}[t]
			\centering
			\begin{tikzpicture}
				
				\Text[x=-1, y=3, fontsize=\small]{$\Delta$};
				\Text[x=-1, y=1.5, fontsize=\small]{$\Delta-1$};
				\Text[x=-1, fontsize=\small]{$\Delta-2$};
				\Text[x=2.25, y=-1, fontsize=\small]{$G'=(V,E\setminus \{(3,1), (3,4)\})$};
				\Text[x=8.25, y=-1, fontsize=\small]{$G''=(V, E\setminus \{(3,2)\})$};
				
				\Text[x=-1, y=7, fontsize=\small]{$\Delta$};
				\Text[x=-1, y=5.5, fontsize=\small]{$\Delta-1$};
				\Text[x=5.25, y=4.5, fontsize=\small]{$G=(V,E)$};
				
				\Vertex[x=6, y=7, Math, shape=circle, color=white, size=.3, label=2, fontscale=1.2, position=above, distance=-.04cm]{A}
				\Vertex[x=7.5, y=7, Math, shape=circle, color=white, size=.3, label=1, fontscale=1.2, position=above, distance=-.04cm]{B}
				\Vertex[x=3, y=5.5, Math, shape=circle, color=white, size=.3, label=4, fontscale=1.2, position=below, distance=-.04cm]{C}
				\Vertex[x=4.5, y=5.5, Math, shape=circle, color=white, size=.3, label=3, fontscale=1.2, position=below, distance=-.04cm]{D}
				\Edge[Direct, color=black, lw=1pt](A)(B)
				\Edge[Direct, color=black, lw=1pt](A)(D)
				\Edge[Direct, color=black, lw=1pt, bend=-25](B)(A)
				\Edge[Direct, color=black, lw=1pt](C)(A)
				\Edge[Direct, color=black, lw=1pt, bend=-5](C)(B)
				\Edge[Direct, color=black, lw=1pt, bend=-25](D)(A)
				\Edge[Direct, color=black, lw=1pt, bend=-25](D)(B)
				\Edge[Direct, color=black, lw=1pt, bend=10](D)(C)

				\Vertex[x=3, y=3, Math, shape=circle, color=white, size=.3, label=2, fontscale=1.2, position=above, distance=-.04cm]{A}
				\Vertex[x=4.5, y=1.5, Math, shape=circle, color=black, size=.3, label=1, fontscale=1.2, position=below, distance=-.04cm]{B}
				\Vertex[Math, shape=circle, color=black, size=.3, label=4, fontscale=1.2, position=below, distance=-.04cm]{C}
				\Vertex[x=1.5, y=1.5, Math, shape=circle, color=white, size=.3, label=3, fontscale=1.2, position=below, distance=-.04cm]{D}
				\Edge[Direct, color=black, lw=1pt, bend=-15](A)(B)
				\Edge[Direct, color=black, lw=1pt, bend=-15](A)(D)
				\Edge[Direct, color=black, lw=1pt, bend=-15](B)(A)
				\Edge[Direct, color=black, lw=1pt, bend=35](C)(A)
				\Edge[Direct, color=black, lw=1pt](C)(B)
				\Edge[Direct, color=black, lw=1pt, bend=-15](D)(A)
				
				\Vertex[x=9, y=1.5, Math, shape=circle, color=black, size=.3, label=2, fontscale=1.2, position=below, distance=-.04cm]{A}
				\Vertex[x=10.5, y=3, Math, shape=circle, color=white, size=.3, label=1, fontscale=1.2, position=above, distance=-.04cm]{B}
				\Vertex[x=6, y=1.5, Math, shape=circle, color=white, size=.3, label=4, fontscale=1.2, position=below, distance=-.04cm]{C}
				\Vertex[x=7.5, y=1.5, Math, shape=circle, color=white, size=.3, label=3, fontscale=1.2, position=below, distance=-.04cm]{D}
				\Edge[Direct, color=black, lw=1pt, bend=-25](A)(B)
				\Edge[Direct, color=black, lw=1pt](A)(D)
				\Edge[Direct, color=black, lw=1pt](B)(A)
				\Edge[Direct, color=black, lw=1pt, bend=-35](C)(A)
				\Edge[Direct, color=black, lw=1pt, bend=15](C)(B)
				\Edge[Direct, color=black, lw=1pt](D)(B)
				\Edge[Direct, color=black, lw=1pt](D)(C)
				
			\end{tikzpicture}
			\caption{Illustration of the fact that the set of vertices selected by \autoref{alg:PWRU-pivotal} is non-empty if $(v_H,\bar{v})\notin E$ for some $\bar{v}\in \calP(G)\setminus \{v_H\}$. Vertices selected by asymmetric plurality with runners-up are drawn in white. Denoting the graph at the top by $G=(V,E)$, where $v_H=2$, $v_L=3$, and $\bar{v}=4$, and defining $G'=(V,E\setminus \{(3,1), (3,4)\})\in \calN_3(G)$, we have that $\{1,4\}\cap \calP(G')=\emptyset$, whereas defining $G''=(V,(E\setminus \{(3,2)\})\in \calN_3(G)$ we have that $2\notin \calP(G'')$. We conclude that $3\in \calP^P(G)$.}
			\label{fig:thm-n-1-example2}
		\end{figure}
		
		Finally, we show that the mechanism selects at most $n-1$ vertices. Let $G=(V,E)\in \calG_n$.
		Since $\calP^P(G)\subseteq \calP(G)$, if $|\calP(G)|\leq n-1$ this is immediate.
		We thus suppose in what follows that $|\calP(G)|=n$.
		In particular, \autoref{lem:char-pwru} implies $(v,v_H)\in E$ for every $v\in V\setminus \{v_H\}$, thus $\Delta(G)=n-1$, and $\delta^-(v)\geq n-2$ for every $v\in V$.
		If $S^1=\emptyset$, then $\delta^-(v)=n-1$ for every $v\in V$, \ie $G$ is the complete graph.
		In this case, $v_H=n$ and we claim that $v\notin \calP^P(G)$ for each $v\in V\setminus \{n\}$, thus $|\calP^P(G)|\leq 1$. This comes from the fact that, for every $v\in V\setminus\{n\}$ and every $G'=(V,E')\in \calN_v(G)$ it holds $n\in \calP(G')$.
		To see this, note that $(n,v)\in E'$ for every $v\in V\setminus \{n\},\ \delta^-(n,G')\geq n-2=\Delta(G')-1$, and $n>v$ for every $v\in V\setminus \{n\}$, so \autoref{lem:char-pwru} ensures $n\in \calP(G')$.
		If $S_1\neq\emptyset$, then there is at least one vertex with outdegree less or equal than $n-2$.
		Let $u$ be an arbitrary vertex with $\delta^+(u)\leq n-2$, and let $\bar{v}\in S_1$ be the vertex with highest index such that $(u,\bar{v})\notin E$, \ie $\bar{v}=\max \{V\setminus N^+(u)\}$.
		Since $u\in \calP^P(G)$, there exists $G'=(V,E')\in \calN_u(G)$ such that $\bar{v}\notin \calP(G')$.
		From \autoref{lem:char-pwru}, this implies that there exists $\bar{w}\in V$ such that either (a)~$(\delta^-(\bar{w},G'),w)>(\delta^-(\bar{v},G'),\bar{v})$ and $(\bar{v},\bar{w})\notin E'$, or (b)~$\delta^-(\bar{w},G')>\delta^-(\bar{v},G')$ and $\bar{w}>\bar{v}$. Since $\bar{v}\in \calP(G)$, we know from this same lemma that if (a)~holds, $(\delta^-(\bar{w},G),w)<(\delta^-(\bar{v},G),\bar{v})$ because of having $\bar{w}\notin N^+(\bar{v},G)=N^+(\bar{v},G')$; and similarly, if (b)~holds, $\delta^-(\bar{w},G)\leq \delta^-(\bar{v},G)$ because of having $\bar{w}>\bar{v}$. In either case, since $\delta^-(\bar{v},G)\leq \delta^-(\bar{v},G')$, we conclude that $\delta^-(\bar{w},G')> \delta^-(\bar{w},G)$, and therefore $(u,\bar{w})\notin E$.
		If (a)~holds, this is a contradiction because we would have $\{u,\bar{v}\}\cap N^-(\bar{w},G)=\emptyset$ and thus $\delta^-(\bar{w},G)\leq n-3$.
		If (b)~holds, we reach a contradiction as well, because we would have $\bar{w}\in V\setminus N^+(u,G)$ and $\bar{w}>\bar{v}$, but we chose $\bar{v}$ to be the maximum of this set.
	\end{proof}

	\section{An Impossibility Result}
	\label{sec:imp-results}
	
	When we established the existence of an impartial and $1$-min-additive $k$-selection mechanism on $\calG(d)$ whenever $k\geq d+1$, we claimed this result to be best possible in the sense that the additive guarantee cannot be improved. We will prove this claim, that impartiality is incompatible with the requirement to only select vertices with maximum indegree, as a corollary of a more general result.
	
	While selecting only vertices with maximum indegree is a natural goal for mechanisms that select varying numbers of vertices, other natural objectives exist for such mechanisms such as maximizing the median or mean indegree of the selected vertices. For both of these objectives, the mechanisms discussed in the previous section immediately provide upper bounds: if a $k$-selection mechanism always selects one vertex with maximum indegree and is $\alpha$-min-additive then it is clearly $\alpha$-median-additive and $\left(\frac{k-1}{k}\alpha\right)$-mean-additive; \autoref{thm:ub-1} thus implies the existence of a $1$-median-additive and $\frac{k-1}{k}$-mean-additive $k$-selection mechanism on $\calG(d)$, whenever $k\geq d+1$. To improve on $1$-median-additivity, it would be acceptable to select vertices with low indegree as long as a greater number of vertices with maximum indegree is selected at the same time. To improve on $\frac{k-1}{k}$-mean-additivity, it would suffice to select more than one vertex with maximum indegree whenever this is possible, and to otherwise select only a sublinear number in $k$ of vertices with indegree equal to the maximum indegree minus one. The following result shows that no such improvements are possible.
	\begin{theorem}  \label{thm:impossibility}
		Let $n\in\NN$, $n\geq 3$, $k\in[n]$, and $d\in[n-1]$. Let $f$ be an impartial $k$-selection mechanism.
		If $f$ is $\alpha_1$-median-additive on $\calG_n(d)$, then $\alpha_1\geq 1/2(1+\mathds{1}(d\geq 3))$. 
		If $f$ is $\alpha_2$-mean-additive on $\calG_n(d)$, then $\alpha_2\geq \left\lfloor \frac{d+1}{2} \right\rfloor / \left(\left\lfloor \frac{d+1}{2} \right\rfloor + 1 \right)$.
	\end{theorem}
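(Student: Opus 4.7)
The plan is to derive both inequalities via the standard impartial-selection impossibility technique. Fixing an impartial $k$-selection mechanism $f$ that is assumed to achieve a small additive $\alpha$, I would construct explicit small-graph instances together with chains of neighboring graphs and use the impartiality constraint $f(G)\cap\{v\} = f(G')\cap\{v\}$ for $G'\in\calN_v(G)$ to propagate information about the selected set until the assumed guarantee is contradicted. The two inequalities would be proved separately, and within each I would split into the cases dictated by the statement --- the jump at $d=3$ for the median bound, and the parity-driven floor $m = \lfloor(d+1)/2\rfloor$ for the mean bound.

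For the median bound, the goal in the regime $d\in\{1,2\}$ is to exhibit a graph where at least as many selected vertices have indegree $\Delta-1$ as indegree $\Delta$, pinning the median at $\Delta - \frac12$. I would take a graph with exactly two candidates for maximum indegree and route a pivot vertex's outgoing edges across several neighboring graphs; impartiality of the pivot forces the mechanism to treat the two candidates identically while the indegrees shift, which compels a runner-up to be selected. For $d\geq 3$, the outdegree budget is large enough to build a three-way conflict into the pivot configuration, and a contradiction is reached even against mechanisms that always include a true maximum-indegree vertex, yielding $\alpha_1 \geq 1$.

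For the mean bound, the key observation is that the target ratio $m/(m+1)$ is exactly the gap $\Delta - \mathrm{mean}$ of the multiset $\{\Delta,\underbrace{\Delta-1,\dots,\Delta-1}_{m}\}$, so the impossibility should be witnessed by an instance in which the mechanism is forced into precisely this composition. I would construct a graph with $m+1$ symmetric candidate vertices of maximum indegree, each with outdegree at most $d$ to the others, so that impartiality applied in turn to each candidate forces that no more than one of them can appear in $f(G)$ as a $\Delta$-vertex, while $m$ specific $(\Delta-1)$-runners-up must appear; averaging the selection outcomes across the resulting neighboring graphs then yields the $m/(m+1)$ bound. The relation $d\geq 2m-1$ encoded in $m=\lfloor(d+1)/2\rfloor$ matches exactly the number of cross-edges needed to make the symmetric construction fit inside $\calG_n(d)$, which explains why the bound flattens on consecutive pairs of outdegrees.

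The main obstacle will be excluding loophole selections in which $f$ adds vertices of very low indegree to the selected set in a way that distorts the accounting (for the mean) or shifts the middle value (for the median). I would close such loopholes by padding each instance with auxiliary vertices whose selection status can be pinned down uniformly from impartiality --- for instance, sources with no incoming edges whose consistent selection across a family of neighboring graphs already violates the target bound --- and then argue by a careful case analysis, or a summation of selection indicators across the family, that the only remaining possibility for $f$ matches the claimed extremal composition. The hypothesis $n\geq 3$ in the theorem is natural here: the constructions in both regimes require at least three distinguishable roles for the vertices, and on two vertices the impartiality constraints degenerate.
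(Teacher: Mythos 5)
Your plan captures the general technique---chains of neighboring graphs plus the impartiality constraint---but the core of the theorem is missing: a concrete construction that actually forces a mechanism meeting the assumed guarantee to select many indegree-$(\Delta-1)$ vertices, and the construction you sketch would not do this. A single graph with $m+1=\lfloor(d+1)/2\rfloor+1$ symmetric maximum-indegree candidates forces nothing: on such a graph a mechanism can select exactly one (or all) of the top vertices and be $0$-additive there, and impartiality ``applied in turn to each candidate'' does not imply that at most one of them can appear as a $\Delta$-vertex; that assertion is a non sequitur as stated. The paper instead works over a clique on $D=[d+1]$ and the family $K_v$ (delete $v$'s outgoing edges, so $v$ is the \emph{unique} top vertex with indegree $d$) and $K_{uv}$ (delete the outgoing edges of both $u$ and $v$, so exactly $u$ and $v$ share the top). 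The guarantee forces $v\in f(K_v)$; impartiality plus the guarantee on $K_{uv}$ yields the cross-implication $u\notin f(K_v)\Rightarrow v\in f(K_u)$; and a pigeonhole count on the auxiliary digraph on $D$ with arcs $(u,v)$ whenever $u\notin f(K_v)$ shows that if every $K_v$ selected fewer than $\lfloor(d+1)/2\rfloor$ runners-up, some vertex would have in- plus out-degree at least $d+1$, producing a mutually non-selecting pair and contradicting the cross-implication. It is this global argument over the whole family, not a symmetry argument within one graph, that pins the selected multiset at one vertex of indegree $d$ plus at least $\lfloor(d+1)/2\rfloor$ vertices of indegree $d-1$, which simultaneously gives the median jump at $d\geq 3$ (at least two runners-up) and the mean bound $m/(m+1)$.

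Two further points. Your explanation of where the floor comes from---``the number of cross-edges needed to make the symmetric construction fit inside $\calG_n(d)$''---is not right: in the actual argument the outdegree bound is satisfied automatically (vertices of a $(d+1)$-clique have outdegree $d$), and $\lfloor(d+1)/2\rfloor$ arises purely from the counting step $2\bigl(d+1-\lfloor(d+1)/2\rfloor\bigr)\geq d+1$. Also, the ``loophole'' you propose to close (the mechanism padding the selection with low-indegree vertices) points in the wrong direction: such padding only lowers the median and the mean, so it is harmless; the real difficulty is ruling out mechanisms that select only top-indegree vertices, which is exactly what the $K_v$/$K_{uv}$ machinery accomplishes. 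Your $d=1$ sketch (two candidates plus a pivot rerouting its edge) is in the spirit of the paper's $3$-cycle argument, but as written the proposal stops at a statement of intent rather than a proof.
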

	\begin{proof}
		Let $n$, $k$, and $d$ be as in the statement of the theorem. In the following we suppose that there is an impartial $k$-selection mechanism $f$ which is either $\alpha_1$-median-additive on $\calG_n(d)$ with $\alpha_1< 1/2(1+\mathds{1}(d\geq 3))$, or $\alpha_2$-mean-additive on $\calG_n(d)$ with $\alpha_2 < \left\lfloor \frac{d+1}{2} \right\rfloor / \left(\left\lfloor \frac{d+1}{2} \right\rfloor + 1 \right)$.
		
		We first prove the result for the case $d=1$. We consider the graph $G=(V,E)\in \calG_n(1)$ with $E=\{(1,2),(2,3),(3,1)\}$, consisting of a $3$-cycle and $n-3$ isolated vertices. We consider as well, for $v\in \{1,2,3\}$, the graph $G_v=(V,E_v)$ where $v$ deviates from the $3$-cycle by changing its outgoing edge to the previous vertex in the cycle, \ie
		\[
		E_1=\{(1,3),(2,3),(3,1)\}, E_2=\{(1,2),(2,1),(3,1)\}, E_3=\{(1,2),(2,3),(3,2)\}.
		\]
		Since $f$ is $\alpha_1$-median-additive with $\alpha_1<1/2$ or $\alpha_2$-mean-additive with $\alpha_2<1/2$, we have that $f(G_1)=\{3\}$, $f(G_2)=\{1\}$, and $f(G_3)=\{2\}$.
		In particular, for $v\in \{1,2,3\},\ v\notin f(G_v)$.
		Since for each $v\in \{1,2,3\}$ it holds $E_v\setminus (\{v\}\times V)=E\setminus (\{v\}\times V)$, we conclude by impartiality that $v\notin f(G)$, and thus $f(G)\cap \{1,2,3\}=\emptyset$. This implies that both the median and the mean indegree of the vertices in $f(G)$ are $0$, which contradicts the additive guarantee of this mechanism because $\Delta(G)=1$.
		
		In the following, we assume $d\geq 2$.
		We denote $D=[d+1]$ and consider in what follows two families of graphs with $n$ vertices, $K_v$ for each $v\in D$ and $K_{uv}$ for each $u,v\in D, u\neq v$. They are constructed from a complete subgraph on $D$ but deleting the outgoing edges of $v$, in the case of $K_v$, and the outgoing edges of $u$ and $v$, in the case of $K_{uv}$. All the other vertices remain isolated. Formally, taking $V=[n]$ we define
		\begin{align*}
			K_v &= (V, (D\setminus \{v\})\times D) \text{ for every } v\in D,\\
			K_{uv} & = (V, (D\setminus \{u,v\})\times D) \text{ for every } u,v\in D \text{ with } u\neq v.
		\end{align*}
		
		If there is $v\in D$ such that $v\notin f(K_v)$, then
		\begin{align*}
			\text{median}\left(\{\delta^-(w, K_v)\}_{w\in f(K_v)}\right) &\leq d-1 = \Delta(K_v)-1,\\
			\text{mean}\left(\{\delta^-(w, K_v)\}_{w\in f(K_v)}\right) &\leq d-1 = \Delta(K_v)-1,
		\end{align*}
		which is a contradiction, so the result follows immediately.
		Therefore, in the following we assume that for every $v\in D$ we have $v\in f(K_v)$.
		We claim that for every $v\in D$,
		\[
		|\{u\in D\setminus \{v\}: u\in f(K_v)\}| \geq \left\lfloor\frac{d+1}{2}\right\rfloor.
		\]
		Let us see why the result follows if the claim holds. If this is the case, $f$ selects one vertex with maximum indegree $d$ in $K_v$ and at least $\left\lfloor\frac{d+1}{2}\right\rfloor$ vertices with indegree $d-1$. This yields both
		\[
		\text{median}\left(\{\delta^-(w, K_v)\}_{w\in f(K_v)}\right) \leq
		\left\{ \begin{array}{ll}
			d-\frac{1}{2} & \text{if }d=2\\
			d-1 & \text{otherwise,}
			
		\end{array}
		\right.
		\]
		and 
		\[
		\text{mean}\left(\{\delta^-(w, K_v)\}_{w\in f(K_v)}\right) \leq \frac{d + (d-1)\left\lfloor\frac{d+1}{2}\right\rfloor}{\left\lfloor\frac{d+1}{2}\right\rfloor+1} = d - \frac{\left\lfloor\frac{d+1}{2}\right\rfloor}{\left\lfloor\frac{d+1}{2}\right\rfloor+1},
		\]
		which is a contradiction since $\Delta(K_v)=d$.
		
		Now we prove the claim. Suppose that for every $v\in D$ we have $v\in f(K_v)$ and
		\begin{align}
			|\{u\in D\setminus \{v\}: u\in f(K_v)\}| < \left\lfloor\frac{d+1}{2}\right\rfloor.\label{eq:select-few-non-top-voted}
		\end{align} 
		Let $v\in D$ and  $u\in D \setminus \{v\}$ such that $u\notin f(K_v)$. Observing that
		\[
		((D\setminus \{v\})\times D)\setminus(\{u\}\times V) = ((D\setminus \{u,v\})\times D)\setminus(\{u\}\times V),
		\]
		we obtain from impartiality that $u\notin f(K_{uv})$. From the bounds on $\alpha_1$ or $\alpha_2$ that $f$ satisfies by assumption, this mechanism has to select a vertex with maximum indegree in this graph; otherwise, both the median and the mean of the selected set would be at most $\Delta(K_{uv})-1$. Since $\delta^-(w)<\Delta(K_{uv})$ for every $w\notin \{u,v\}$, it holds $v \in f(K_{uv})$. Using impartiality once again, we conclude $v\in f(K_u)$. We have shown the following property:
		\begin{align}
			\text{For every } u,v\in D: u\notin f(K_v) \Longrightarrow v\in f(K_u).\label{eq:cross-relation-complete-subgraphs}
		\end{align}
		
		Consider now the graph $H=(D,F)$, where for each $u,v\in D$ with $u\neq v$,
		$(u,v)\in F$ if and only if $u\notin f(K_v)$.
		Property \eqref{eq:select-few-non-top-voted} implies that
		\[
		\delta^-(v,H) > d-\left\lfloor\frac{d+1}{2}\right\rfloor \Longleftrightarrow \delta^-(v,H) \geq d+1-\left\lfloor\frac{d+1}{2}\right\rfloor
		\]
		for each $v\in D$. In particular, there has to be a vertex $v^*\in D$ such that $\delta^+(v^*,H) \geq d+1-\lfloor(d+1)/(2)\rfloor$ as well. For this vertex we have
		\[
		\delta^+(v^*,H) + \delta^-(v^*,H) \geq 2\left(d+1-\left\lfloor\frac{d+1}{2}\right\rfloor\right) \geq d+1.
		\]
		Since $H$ has $d+1$ vertices, this implies the existence of $w^*\in D$ for which $\{(v^*,w^*),(w^*,v^*)\}\subset F$, \ie both $v^*\notin f(K_{w^*})$ and $w^*\notin f(K_{v^*})$. This contradicts \eqref{eq:cross-relation-complete-subgraphs}, so we conclude the proof of the claim and the proof of the theorem.
	\end{proof}
	
	\autoref{fig:lb_3} provides an illustration of \autoref{thm:impossibility} for the case where $n=3$, \autoref{fig:lb_4} for the case where $n=4$.
	\begin{figure}[t]
		\centering
		\begin{tikzpicture}
			
			%above
			\Vertex[x=5, y=5.46, Math, shape=circle, color=black, size=.3, label=1, fontscale=1.2, position=left, distance=-.01cm]{A}
			\Vertex[x=4, y=3.73, Math, shape=circle, color=black, size=.3, label=2, fontscale=1.2, position=below, distance=-.01cm]{B}
			\Vertex[x=6, y=3.73, Math, shape=circle, color=white, size=.3, label=3, fontscale=1.2, position=below, distance=-.01cm]{C}
			\Edge[Direct, color=black, lw=1pt, bend=-15](A)(C)
			\Edge[Direct, color=black, lw=1pt](B)(C)
			\Edge[Direct, color=black, lw=1pt, bend=-15](C)(A)
			
			\draw[->] (5,3.23) -- (5,2.23);
			\Text[x=4.8, y=2.73, fontsize=\small]{$1$};

			%left
			\Vertex[x=1, y=1.73, Math, shape=circle, color=white, size=.3, label=1, fontscale=1.2, position=left, distance=-.01cm]{A}
			\Vertex[Math, shape=circle, color=black, size=.3, label=2, fontscale=1.2, position=below, distance=-.01cm]{B}
			\Vertex[x=2, Math, shape=circle, color=black, size=.3, label=3, fontscale=1.2, position=below, distance=-.01cm]{C}
			\Edge[Direct, color=black, lw=1pt, bend=-15](A)(B)
			\Edge[Direct, color=black, lw=1pt](C)(A)
			\Edge[Direct, color=black, lw=1pt, bend=-15](B)(A)
			
			\draw[->] (2.5,.86) -- (3.5,.86);
			\Text[x=3, y=1.06, fontsize=\small]{$2$};

			%right
			\Vertex[x=9, y=1.73, Math, shape=circle, color=black, size=.3, label=1, fontscale=1.2, position=left, distance=-.01cm]{A}
			\Vertex[x=8, Math, shape=circle, color=white, size=.3, label=2, fontscale=1.2, position=below, distance=-.01cm]{B}
			\Vertex[x=10, Math, shape=circle, color=black, size=.3, label=3, fontscale=1.2, position=below, distance=-.01cm]{C}
			\Edge[Direct, color=black, lw=1pt](A)(B)
			\Edge[Direct, color=black, lw=1pt, bend=-15](C)(B)
			\Edge[Direct, color=black, lw=1pt, bend=-15](B)(C)
			
			\draw[->] (7.5,.86) -- (6.5,.86);
			\Text[x=7, y=1.06, fontsize=\small]{$3$};

			%mid
			\Vertex[x=5, y=1.73, Math, shape=circle, color=black, size=.3, label=1, fontscale=1.2, position=left, distance=-.01cm]{A}
			\Vertex[x=4, Math, shape=circle, color=black, size=.3, label=2, fontscale=1.2, position=below, distance=-.01cm]{B}
			\Vertex[x=6, Math, shape=circle, color=black, size=.3, label=3, fontscale=1.2, position=below, distance=-.01cm]{C}
			\Edge[Direct, color=black, lw=1pt](A)(B)
			\Edge[Direct, color=black, lw=1pt](C)(A)
			\Edge[Direct, color=black, lw=1pt](B)(C)

		\end{tikzpicture}
		\caption{Counterexample to the existence of an impartial $3$-selection mechanism that is $\alpha$-median-additive or $\alpha$-mean-additive on $\calG_3$ for $\alpha<1/2$. Vertices drawn in white have to be selected, vertices in black cannot be selected. 
			For the graphs at the top, on the left, and on the right, this follows from $\alpha$-median-additivity or $\alpha$-mean-additivity for $\alpha<1/2$.
			An arrow with label $v$ from one graph to another indicates that one can be obtained from the other by changing the outgoing edges of vertex~$v$; by impartiality, the vertex thus has to be selected in both graphs or not selected in both graphs. It follows that no vertices are selected in the graph at the center, a contradiction to the claimed additive guarantee.}
		\label{fig:lb_3}
	\end{figure}
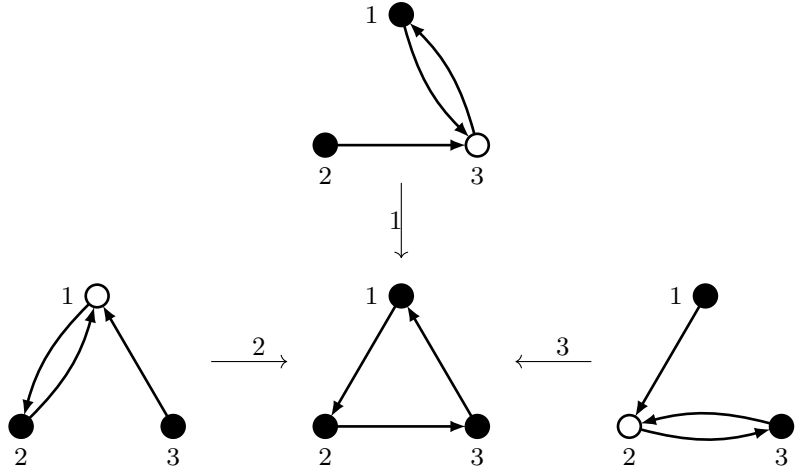

	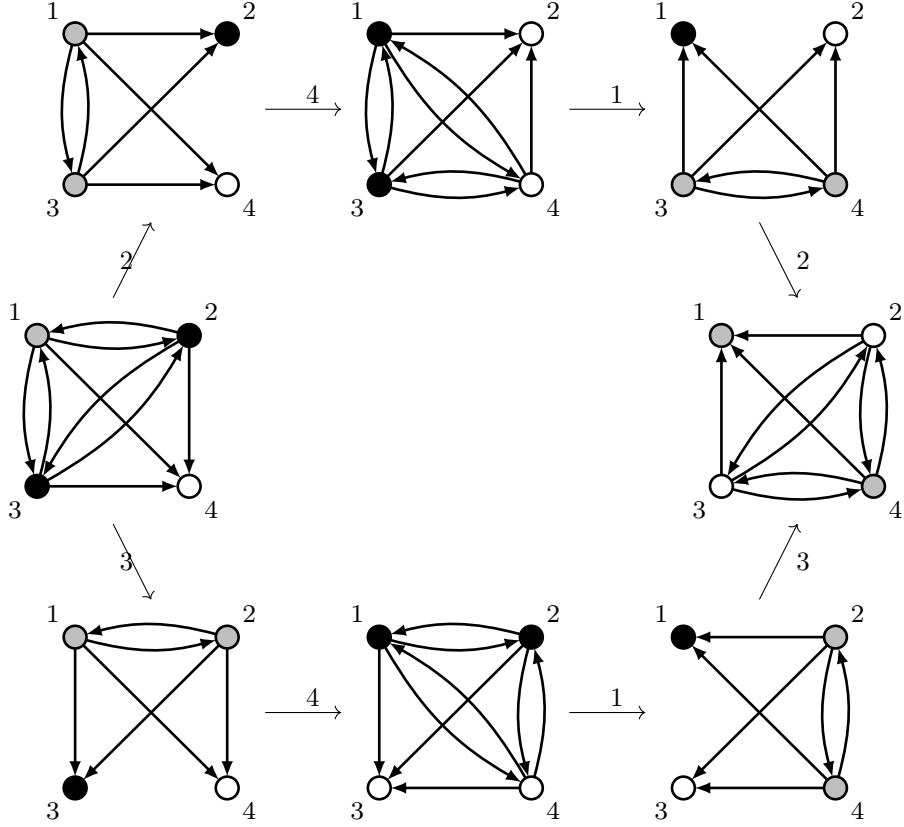
\begin{figure}[!ht]
		\centering
		\begin{tikzpicture}
			
			%mid left
			\Vertex[x=-.5,y=6, Math, shape=circle, color=lightgray, size=.3, label=1, fontscale=1.2, position=above left, distance=-.08cm]{A}
			\Vertex[x=1.5, y=6, Math, shape=circle, color=black, size=.3, label=2, fontscale=1.2, position=above right, distance=-.08cm]{B}
			\Vertex[x=-.5,y=4, Math, shape=circle, color=black, size=.3, label=3, fontscale=1.2, position=below left, distance=-.08cm]{C}
			\Vertex[x=1.5, y=4, Math, shape=circle, color=white, size=.3, label=4, fontscale=1.2, position=below right, distance=-.08cm]{D}
			\Edge[Direct, color=black, lw=1pt, bend=-15](A)(B)
			\Edge[Direct, color=black, lw=1pt, bend=-15](B)(A)
			\Edge[Direct, color=black, lw=1pt, bend=-15](A)(C)
			\Edge[Direct, color=black, lw=1pt, bend=-15](C)(A)
			\Edge[Direct, color=black, lw=1pt, bend=-15](B)(C)
			\Edge[Direct, color=black, lw=1pt, bend=-15](C)(B)
			\Edge[Direct, color=black, lw=1pt](A)(D)
			\Edge[Direct, color=black, lw=1pt](B)(D)
			\Edge[Direct, color=black, lw=1pt](C)(D)

			\draw[->] (.5,6.5) -- (1,7.5);
			\Text[x=.55, y=7, fontsize=\small]{$2$};
			
			\draw[->] (.5,3.5) -- (1,2.5);
			\Text[x=.55, y=3, fontsize=\small]{$3$};
			
			%up left
			\Vertex[y=10, Math, shape=circle, color=lightgray, size=.3, label=1, fontscale=1.2, position=above left, distance=-.08cm]{A}
			\Vertex[x=2, y=10, Math, shape=circle, color=black, size=.3, label=2, fontscale=1.2, position=above right, distance=-.08cm]{B}
			\Vertex[y=8, Math, shape=circle, color=lightgray, size=.3, label=3, fontscale=1.2, position=below left, distance=-.08cm]{C}
			\Vertex[x=2, y=8, Math, shape=circle, color=white, size=.3, label=4, fontscale=1.2, position=below right, distance=-.08cm]{D}
			\Edge[Direct, color=black, lw=1pt](A)(B)
			\Edge[Direct, color=black, lw=1pt, bend=-15](A)(C)
			\Edge[Direct, color=black, lw=1pt, bend=-15](C)(A)
			\Edge[Direct, color=black, lw=1pt](C)(B)
			\Edge[Direct, color=black, lw=1pt](A)(D)
			\Edge[Direct, color=black, lw=1pt](C)(D)
			
			\draw[->] (2.5,9) -- (3.5,9);
			\Text[x=3, y=9.2, fontsize=\small]{$4$};
			
			%below left
			\Vertex[y=2, Math, shape=circle, color=lightgray, size=.3, label=1, fontscale=1.2, position=above left, distance=-.08cm]{A}
			\Vertex[x=2, y=2, Math, shape=circle, color=lightgray, size=.3, label=2, fontscale=1.2, position=above right, distance=-.08cm]{B}
			\Vertex[Math, shape=circle, color=black, size=.3, label=3, fontscale=1.2, position=below left, distance=-.08cm]{C}
			\Vertex[x=2, Math, shape=circle, color=white, size=.3, label=4, fontscale=1.2, position=below right, distance=-.08cm]{D}
			\Edge[Direct, color=black, lw=1pt, bend=-15](A)(B)
			\Edge[Direct, color=black, lw=1pt, bend=-15](B)(A)
			\Edge[Direct, color=black, lw=1pt](A)(C)
			\Edge[Direct, color=black, lw=1pt](B)(C)
			\Edge[Direct, color=black, lw=1pt](A)(D)
			\Edge[Direct, color=black, lw=1pt](B)(D)
			
			\draw[->] (2.5,1) -- (3.5,1);
			\Text[x=3, y=1.2, fontsize=\small]{$4$};
			
			%up center
			\Vertex[x=4, y=10, Math, shape=circle, color=black, size=.3, label=1, fontscale=1.2, position=above left, distance=-.08cm]{A}
			\Vertex[x=6, y=10, Math, shape=circle, color=white, size=.3, label=2, fontscale=1.2, position=above right, distance=-.08cm]{B}
			\Vertex[x=4, y=8, Math, shape=circle, color=black, size=.3, label=3, fontscale=1.2, position=below left, distance=-.08cm]{C}
			\Vertex[x=6, y=8, Math, shape=circle, color=white, size=.3, label=4, fontscale=1.2, position=below right, distance=-.08cm]{D}
			\Edge[Direct, color=black, lw=1pt](A)(B)
			\Edge[Direct, color=black, lw=1pt, bend=-15](A)(C)
			\Edge[Direct, color=black, lw=1pt, bend=-15](C)(A)
			\Edge[Direct, color=black, lw=1pt](C)(B)
			\Edge[Direct, color=black, lw=1pt, bend=-15](A)(D)
			\Edge[Direct, color=black, lw=1pt, bend=-15](C)(D)
			\Edge[Direct, color=black, lw=1pt, bend=-15](D)(A)
			\Edge[Direct, color=black, lw=1pt](D)(B)
			\Edge[Direct, color=black, lw=1pt, bend=-15](D)(C)
			
			\draw[->] (6.5,9) -- (7.5,9);
			\Text[x=7, y=9.2, fontsize=\small]{$1$};
			
			%below center
			\Vertex[x=4, y=2, Math, shape=circle, color=black, size=.3, label=1, fontscale=1.2, position=above left, distance=-.08cm]{A}
			\Vertex[x=6, y=2, Math, shape=circle, color=black, size=.3, label=2, fontscale=1.2, position=above right, distance=-.08cm]{B}
			\Vertex[x=4, Math, shape=circle, color=white, size=.3, label=3, fontscale=1.2, position=below left, distance=-.08cm]{C}
			\Vertex[x=6, Math, shape=circle, color=white, size=.3, label=4, fontscale=1.2, position=below right, distance=-.08cm]{D}
			\Edge[Direct, color=black, lw=1pt, bend=-15](A)(B)
			\Edge[Direct, color=black, lw=1pt, bend=-15](B)(A)
			\Edge[Direct, color=black, lw=1pt](A)(C)
			\Edge[Direct, color=black, lw=1pt](B)(C)
			\Edge[Direct, color=black, lw=1pt, bend=-15](A)(D)
			\Edge[Direct, color=black, lw=1pt, bend=-15](B)(D)
			\Edge[Direct, color=black, lw=1pt, bend=-15](D)(A)
			\Edge[Direct, color=black, lw=1pt, bend=-15](D)(B)
			\Edge[Direct, color=black, lw=1pt](D)(C)
			
			\draw[->] (6.5,1) -- (7.5,1);
			\Text[x=7, y=1.2, fontsize=\small]{$1$};
			
			%up right
			\Vertex[x=8, y=10, Math, shape=circle, color=black, size=.3, label=1, fontscale=1.2, position=above left, distance=-.08cm]{A}
			\Vertex[x=10, y=10, Math, shape=circle, color=white, size=.3, label=2, fontscale=1.2, position=above right, distance=-.08cm]{B}
			\Vertex[x=8, y=8, Math, shape=circle, color=lightgray, size=.3, label=3, fontscale=1.2, position=below left, distance=-.08cm]{C}
			\Vertex[x=10, y=8, Math, shape=circle, color=lightgray, size=.3, label=4, fontscale=1.2, position=below right, distance=-.08cm]{D}
			\Edge[Direct, color=black, lw=1pt](C)(A)
			\Edge[Direct, color=black, lw=1pt](C)(B)
			\Edge[Direct, color=black, lw=1pt, bend=-15](C)(D)
			\Edge[Direct, color=black, lw=1pt](D)(A)
			\Edge[Direct, color=black, lw=1pt](D)(B)
			\Edge[Direct, color=black, lw=1pt, bend=-15](D)(C)
			
			%below right
			\Vertex[x=8, y=2, Math, shape=circle, color=black, size=.3, label=1, fontscale=1.2, position=above left, distance=-.08cm]{A}
			\Vertex[x=10, y=2, Math, shape=circle, color=lightgray, size=.3, label=2, fontscale=1.2, position=above right, distance=-.08cm]{B}
			\Vertex[x=8, Math, shape=circle, color=white, size=.3, label=3, fontscale=1.2, position=below left, distance=-.08cm]{C}
			\Vertex[x=10, Math, shape=circle, color=lightgray, size=.3, label=4, fontscale=1.2, position=below right, distance=-.08cm]{D}
			\Edge[Direct, color=black, lw=1pt](B)(A)
			\Edge[Direct, color=black, lw=1pt](B)(C)
			\Edge[Direct, color=black, lw=1pt, bend=-15](B)(D)
			\Edge[Direct, color=black, lw=1pt](D)(A)
			\Edge[Direct, color=black, lw=1pt, bend=-15](D)(B)
			\Edge[Direct, color=black, lw=1pt](D)(C)
			
			\draw[->] (9,7.5) -- (9.5,6.5);
			\Text[x=9.45, y=7, fontsize=\small]{$2$};
			
			\draw[->] (9,2.5) -- (9.5,3.5);
			\Text[x=9.45, y=3, fontsize=\small]{$3$};
			
			%mid right
			\Vertex[x=8.5, y=6, Math, shape=circle, color=lightgray, size=.3, label=1, fontscale=1.2, position=above left, distance=-.08cm]{A}
			\Vertex[x=10.5, y=6, Math, shape=circle, color=white, size=.3, label=2, fontscale=1.2, position=above right, distance=-.08cm]{B}
			\Vertex[x=8.5, y=4, Math, shape=circle, color=white, size=.3, label=3, fontscale=1.2, position=below left, distance=-.08cm]{C}
			\Vertex[x=10.5, y=4, Math, shape=circle, color=lightgray, size=.3, label=4, fontscale=1.2, position=below right, distance=-.08cm]{D}
			\Edge[Direct, color=black, lw=1pt](B)(A)
			\Edge[Direct, color=black, lw=1pt, bend=-15](B)(C)
			\Edge[Direct, color=black, lw=1pt, bend=-15](B)(D)
			\Edge[Direct, color=black, lw=1pt](C)(A)
			\Edge[Direct, color=black, lw=1pt, bend=-15](C)(B)
			\Edge[Direct, color=black, lw=1pt, bend=-15](C)(D)
			\Edge[Direct, color=black, lw=1pt](D)(A)
			\Edge[Direct, color=black, lw=1pt, bend=-15](D)(B)
			\Edge[Direct, color=black, lw=1pt, bend=-15](D)(C)
			
		\end{tikzpicture}
		\caption{Counterexample to the existence of an impartial $4$-selection mechanism that is $\alpha_1$-median-additive on $\calG_4(3)$ for $\alpha_1<1$ or $\alpha_2$-mean-additive on $\calG_4$ for $\alpha_2<2/3$. Vertices drawn in white have to be selected, vertices in black cannot be selected, and vertices in gray may or may not be selected. 
			For the graph on the left, this follows from $\alpha_1$-median-additivity for $\alpha_1<1$ or $\alpha_2$-mean-additivity for $\alpha_2<2/3$: under these assumptions at most one of the vertices with indegree 2 can be selected, which without loss of generality we can assume to be vertex~$1$. For the other graphs, it then follows by impartiality, and for the graph on the right yields a contradiction to the claimed additive guarantees.}
		\label{fig:lb_4}
	\end{figure}
	
	The median of any set of numbers is an upper bound on their minimum. Therefore, if no impartial mechanism exists that is $\alpha$-median-additive on $\calG'\subseteq\calG$ for $\alpha<\bar{\alpha}$, then no impartial mechanism can exist that is $\alpha$-min-additive on $\calG'$ for $\alpha<\lceil \bar{\alpha}\rceil$. We thus obtain the following impossibility result, which we have claimed previously.
	\begin{corollary}
		Let $n\in\NN,\ n\geq 3$, and $k\in [n]$. Let $f$ be an $\alpha$-min-additive impartial $k$-selection mechanism on $\calG_n$. Then $\alpha\geq 1$.
	\end{corollary}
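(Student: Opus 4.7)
The plan is to derive the corollary from \autoref{thm:impossibility} by exploiting the fact that indegrees are integers, together with the elementary observation that the median of a set of real numbers is always at least their minimum. I would argue by contrapositive: assume some impartial $k$-selection mechanism $f$ is $\alpha$-min-additive on $\calG_n$ for some $\alpha<1$, and derive a contradiction.

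First, I would observe two structural consequences of the assumption $\alpha<1$. On the one hand, if $f(G)=\emptyset$ for some $G$ then by convention $\sigma(\emptyset)=0$, so the additive gap is $\Delta(G)$; hence $\Delta(G)\leq \alpha<1$, forcing $\Delta(G)=0$. On the other hand, whenever $f(G)\neq\emptyset$, the quantity $\Delta(G)-\min_{v\in f(G)}\delta^-(v)$ is a non-negative integer bounded above by $\alpha<1$ and therefore equal to $0$. Combining these, every selected vertex must have indegree equal to $\Delta(G)$ on every graph.

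Consequently, the median of $\{\delta^-(v,G)\}_{v\in f(G)}$ equals $\Delta(G)$ whenever $\Delta(G)\geq 1$ and the additive gap is trivially $0$ when $\Delta(G)=0$, so $f$ is in fact $0$-median-additive on $\calG_n$, and in particular on $\calG_n(d)$ for any $d\in[n-1]$. Applying \autoref{thm:impossibility} with any such $d$ (for instance $d=1$, which is admissible since $n\geq 3$) yields $0\geq 1/2(1+\mathds{1}(d\geq 3))\geq 1/2$, a contradiction. There is no real obstacle here; the only point requiring care is the empty-selection case, which is precisely what accounts for the rounding up from $\bar{\alpha}$ to $\lceil\bar{\alpha}\rceil$ described informally in the paragraph preceding the corollary.
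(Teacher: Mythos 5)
Your argument is correct and follows the paper's own route: the paper derives the corollary from \autoref{thm:impossibility} via the observation that the median dominates the minimum, with the integrality of indegree gaps accounting for the rounding from $1/2$ up to $1$, which is exactly what you make explicit (including the $f(G)=\emptyset$ convention). The only difference is that you spell out the rounding argument in detail, which the paper leaves implicit in the phrase ``$\alpha<\lceil\bar{\alpha}\rceil$''.
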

	
	The impossibility results imply that for $k\geq d+1$, the mechanisms of \autoref{sec:pwru} are best possible for the minimum and median objectives except in a few boundary cases. When $n=2$, selecting each of the two vertices if and only if it has an incoming edge is impartial and achieves $0$-min-additivity and $0$-median-additivity. When $n=3$, it is possible to select in an impartial way at least one vertex with maximum indegree and at most one vertex with indegree equal to the maximum indegree minus one, thus guaranteeing $1/2$-median-additivity.
	For the mean objective, the mechanisms of \autoref{sec:pwru} are best possible asymptotically under the additional assumption that $k=O(d)$.
	
	It is worth pointing out that the proof of the impossibility result uses graphs in which some vertices, in particular those with maximum indegree, do not have any outgoing edges. However, the impossibility extends naturally to the case where this cannot happen, corresponding to the practically relevant case in which abstentions are not allowed, as long as $n\geq 4$ and $d\geq 3$. For this it is enough to define $D=[d]$, add a new vertex with outgoing edges to every vertex in $D$ and incoming edges from the vertices in $D$ which do not have any outgoing edge, and construct a cycle containing the vertices in $V\setminus D$.
	
	\section{Trading Off Quantity and Quality}
	\label{sec:edge-deletion}
	
	We have so far given impartial selection mechanisms for settings where the maximum outdegree~$d$ is smaller than the maximum number~$k$ of vertices that can be selected,
	and have shown that the mechanisms provide best possible additive guarantees in such settings. We will now consider settings where~$d\geq k$, such that asymmetric plurality with runners-up selects too many vertices and therefore cannot be used directly. 
	We obtain the following result.
	\begin{theorem}
		\label{thm:ub-deletion}
		For every $n\in \NN$ and $k\in \{2,\ldots,n\}$, there exists an impartial and $(\lfloor (n-2)/(k-1)\rfloor+1)$-min-additive $k$-selection mechanism on $\calG_n$.
	\end{theorem}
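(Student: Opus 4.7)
The plan is to construct $f(G) = \calP(D(G))$, where $\calP$ is asymmetric plurality with runners-up from \autoref{alg:PWRU-order} and $D$ is a deterministic edge-deletion operator satisfying two conditions: (a) for every vertex $u$, the subset of $u$'s outgoing edges retained in $D(G)$ depends only on $N^+(u)$ and $V$; and (b) every vertex has outdegree at most $k-1$ in $D(G)$. A natural realization of $D$ is to fix a partition of (most of) the vertex set into $k-1$ groups of size at most $\alpha = \lfloor(n-2)/(k-1)\rfloor+1$, and for each vertex $u$ and each group, retain a single outgoing edge from $u$ into that group, chosen by a fixed local rule applied to $N^+(u)$.

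Impartiality follows from (a): modifying $v$'s outgoing edges does not change the retained outgoing edges of any other vertex, so the graph $D(G)_v = D(G) \setminus (\{v\} \times V)$ is independent of $v$'s outgoing edges in $G$; since $\calP$ decides on $v$ based only on $D(G)_v$, the mechanism is impartial. The bound $|f(G)| \leq k$ follows from (b) together with \autoref{thm:ub-1} applied to $D(G)$, which ensures $|\calP(D(G))| \leq k$.

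The main step is $\alpha$-min-additivity. For any selected $v \in \calP(D(G))$, \autoref{lem:char-pwru} gives $\delta^-(v, D(G)) \geq \Delta(D(G)) - 1$, and since deletion cannot increase indegrees, $\delta^-(v, G) \geq \Delta(D(G)) - 1$. It suffices therefore to show $\Delta(D(G)) \geq \Delta(G) - \alpha + 1$. Letting $t$ be a vertex with $\delta^-(t, G) = \Delta(G)$ and $V_{j(t)}$ the group of the partition containing $t$, under the partition rule every $u$ with $(u,t) \in E$ retains in $D(G)$ an edge from $u$ to some vertex of $V_{j(t)}$ of index at least $t$. Summing over $u$, the total $D(G)$-indegree of the at most $\alpha$ vertices of $V_{j(t)}$ with index at least $t$ is at least $\Delta(G)$; a careful accounting, using the specific local rule together with \autoref{lem:char-pwru}\ref{item:outgoing-edges} to rule out selection of poor runners-up, then yields the additive bound.

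The main obstacle is the conversion of the preceding sum into the additive estimate $\Delta(D(G)) \geq \Delta(G) - \alpha + 1$. A naive pigeonhole on the at most $\alpha$ terms only yields the multiplicative bound $\Delta(D(G)) \geq \Delta(G)/\alpha$, too weak when $\Delta(G)$ is large. The deletion rule must therefore be chosen so that the reallocated edges concentrate on a distinguished vertex of the group rather than spreading evenly, for instance by forcing each retained edge to go to the group maximum, whose $D(G)$-indegree always equals its $G$-indegree. The precise value $\alpha = \lfloor(n-2)/(k-1)\rfloor+1$ emerges from the constraint $(k-1)\alpha \geq n-1$, which just allows one to partition $n-1$ of the vertices into $k-1$ groups of size $\alpha$, the remaining vertex requiring separate treatment.
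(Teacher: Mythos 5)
Your skeleton---compose asymmetric plurality with runners-up with a sender-local edge-deletion operator $D$ and apply \autoref{lem:char-pwru} to $D(G)$---matches the paper's in spirit, and both your impartiality argument and the cardinality bound via condition (b) and \autoref{thm:ub-1} are sound. The gap is exactly the step you defer to a ``careful accounting'': the inequality $\Delta(D(G))\geq \Delta(G)-\alpha+1$ is false for the deletion rule you sketch. Retaining one edge per sender into each group caps outdegrees but puts no bound on how many incoming edges a single \emph{receiver} loses. Concretely, take $\alpha=2$ (groups of size two), a group $\{t,w\}$ with $t<w$, give $t$ exactly $2m$ in-neighbors of which $m$ also point to $w$, and let no other vertex have large indegree. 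The rule ``retain the edge to the highest-indexed out-neighbor in the group'' deletes $(u,t)$ for each of the $m$ senders that also point to $w$, so $\delta^-(t,D(G))=\delta^-(w,D(G))=m$ while $\Delta(G)=2m$; the tie-break then selects $w$, whose indegree in $G$ is only $m$, so min-additivity fails by $m$ rather than by $2$. Your proposed repair---forcing each retained edge to go to the group maximum---is unavailable, since $D$ may only delete edges: when a sender does not point to the group maximum there is nothing to redirect, and whether you keep or drop its edge to $t$ the same example stands. The invariant you invoke (the group maximum keeps its full $G$-indegree) is true but does not help, because that vertex's $G$-indegree may be far below $\Delta(G)$. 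More structurally, any rule satisfying your condition (b) deletes, on the complete graph, at least $n(n-k)$ edges, i.e.\ $n-k$ incoming edges per vertex on average, and $n-k$ is in general much larger than $\lfloor (n-2)/(k-1)\rfloor$; so no outdegree-capping rule admits the uniform per-receiver deletion bound that would yield $\Delta(D(G))\geq\Delta(G)-\alpha+1$ directly, and an argument tailored to the maximum-indegree vertex would be needed, which the natural local rules do not support.

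The paper resolves this by bounding deletions per receiver rather than per sender: \autoref{alg:PWRU-deletion} deletes, from each vertex $u$, exactly the edges to the $r=\lfloor (n-2)/(k-1)\rfloor$ vertices following $u$ in the tie-breaking order. Each vertex then loses at most $r$ incoming edges, so $\Delta(\bar G)\geq\Delta(G)-r$ and every selected vertex has $G$-indegree at least $\Delta(\bar G)-1\geq\Delta(G)-r-1$ by \autoref{lem:char-pwru}. The bound $|\calP(\bar G)|\leq k$ comes not from an outdegree cap (outdegrees in $\bar G$ may remain large) but from structure: selected vertices must point to all lexicographically larger selected vertices, so on each of the two top indegree levels consecutive selected vertices are at least $r+1$ apart in index, and the counting in \autoref{lem:deletion} caps their number at $k$. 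To salvage your plan, replace the per-group retention by a per-receiver-bounded deletion of this kind.
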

	
	The result is obtained by a variant of asymmetric plurality with runners-up in which some edges are deleted before the mechanism is run.
	In principle, deleting a certain number of edges can affect the additive guarantee by the same amount, if all of the deleted edges happen to be directed at the same vertex.
	By studying the structure of the set of vertices selected by the mechanism, we will instead be able to delete edges to distinct vertices and thus keep the negative impact on the additive guarantee under control.
	
	The modified mechanism, which we call \textit{asymmetric plurality with runners-up and edge deletion}, is formally described in \autoref{alg:PWRU-deletion}. It deletes any edges from a vertex to the $\lfloor(n-2)/(k-1)\rfloor$ vertices preceding that vertex in the tie-breaking order, and applies asymmetric plurality with runners-up to the resulting graph. The following lemma shows that without such edges, the maximum number of vertices selected is reduced to~$k$.
	
	\begin{algorithm}[t]
		\SetAlgoNoLine
		\KwIn{Digraph $G=(V,E)\in \calG_n$, $k\in \{2,\ldots,n\}$}
		\KwOut{Set $S\subseteq V$ of selected vertices with $|S|\leq k$}
		Let $r=\left\lfloor (n-2)/(k-1)\right\rfloor$ \tcp*{number of outgoing edges to remove}
		Let $R=\bigcup_{u=1}^{n-1}\bigcup_{v=u+1}^{\min\{u+r,n\}}\{(u,v)\}$ \tcp*{edges to be removed}
		Let $\bar{G}=(V, E\setminus R)$\;
		{\bf Return} $\calP(\bar{G})$
		\caption{Asymmetric plurality with runners-up and edge deletion $\mathsf{\calP^D(G)}$}
		\label{alg:PWRU-deletion}
	\end{algorithm}
	
	\begin{lemma}\label{lem:deletion}
		Let $n\in \NN,\ k\in \{2,\ldots,n\}$, and $r\in \NN$ with $r\geq \lfloor  (n-2)/(k-1)\rfloor$. Let $G=(V,E)\in \calG_n$ be such that for every $u\in \{1,\ldots,n-1\}$ and every $v\in \{u+1,\ldots, \min\{u+r,n\}\},\ (u,v)\notin E$.
		Then, $|\calP(G)|\leq k$.
	\end{lemma}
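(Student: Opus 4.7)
The plan is to use \autoref{lem:char-pwru} to argue that the indices of the vertices in $\calP(G)$ are spread widely enough in $[n]$ that no more than $k$ of them fit. Let $m := |\calP(G)|$ and assume $m \geq 2$ (else the bound is immediate since $k \geq 2$); list the selected vertices in increasing order of index as $w_1 < w_2 < \cdots < w_m$. \autoref{lem:char-pwru} guarantees that every $w_i$ has indegree $\Delta(G)$ or $\Delta(G)-1$, and that any selected vertex of indegree $\Delta(G)-1$ has index strictly larger than every vertex in $V$ of indegree $\Delta(G)$. Consequently the list splits into a prefix $w_1, \ldots, w_p$ of vertices with indegree $\Delta(G)$ and a (possibly empty) suffix $w_{p+1}, \ldots, w_m$ with indegree $\Delta(G)-1$, where $p \geq 1$ because $\ttop(G) \in \calP(G)$.

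The central step is to lower bound the gap between consecutive selected vertices within each block. For $i < p$, both $w_i$ and $w_{i+1}$ have indegree $\Delta(G)$ with $w_{i+1} > w_i$, so condition~\ref{item:outgoing-edges} of \autoref{lem:char-pwru} applied to $w_i$ forces $(w_i, w_{i+1}) \in E$. Since the hypothesis rules out every edge $(u,v)$ with $u < v \leq u+r$, we conclude $w_{i+1} \geq w_i + r + 1$. The same argument, now using indegree $\Delta(G)-1$, handles consecutive pairs within the suffix. Between the two blocks, however, all that is available is the trivial $w_{p+1} \geq w_p + 1$, because $w_{p+1}$ has strictly smaller indegree than $w_p$ and hence does not trigger the edge requirement from $w_p$ through condition~\ref{item:outgoing-edges}.

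Summing the gaps and using $w_1 \geq 1$ yields $w_m \geq 2 + (m-2)(r+1)$ in all cases (if the suffix is empty, the slightly stronger $w_m \geq 1 + (m-1)(r+1)$ holds, which certainly implies this). Combined with $w_m \leq n$ this gives $(m-2)(r+1) \leq n - 2$. From $r \geq \lfloor(n-2)/(k-1)\rfloor$ one checks, by writing $n-2 = q(k-1) + s$ with $0 \leq s \leq k-2$, that $(r+1)(k-1) \geq n - 1$; hence $(m-2)(r+1) < (r+1)(k-1)$, which gives $m \leq k$. The main subtlety in executing this plan is the asymmetric between-block gap of only $1$ rather than $r+1$; the loss of an additive $r$ it introduces is precisely absorbed by the slack in the final arithmetic.
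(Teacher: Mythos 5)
Your proof is correct and follows essentially the same route as the paper: both arguments use \autoref{lem:char-pwru} to force an edge between consecutive selected vertices of equal indegree, hence a spacing of at least $r+1$ within each of the two indegree classes, yielding the bound $|\calP(G)|\leq 2+(n-2)/(r+1)$ before the same final arithmetic with $r\geq\lfloor(n-2)/(k-1)\rfloor$. The only difference is bookkeeping (you sum index gaps and bound $w_m\leq n$, while the paper counts the non-selected vertices lying between consecutive selected ones), which produces the identical intermediate bound.
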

	
	\begin{proof}
		As in the proof of \autoref{thm:ub-1-n-1}, we let $S^i=\{v\in \calP(G):\delta^-(v)=\Delta(G)-i\}$ and $n^i=|S^i|$ for $i\in \{0,1\}$, and now we denote its elements in increasing order by $v^i_j$ for $j\in [n^i]$, \ie
		\[
		S^i=\{v^i_j\}_{j=1}^{n^i} \text{ with } v^i_1<v^i_2\dots <v^i_{n^i} \text{ for each } i\in \{0,1\}.
		\]
		From \autoref{lem:char-pwru}, we know that $\calP(G)= S^0\cup S^1$, that for $i\in\{0,1\}$ we have $(v^i_j,v^i_k)\in E$ for every $j,k$ with $j<k$, and that $v^1_{1}>v^0_{n^0}$. This allows to define, for $i\in \{0,1\}$,
		\[
		\bar{S}^i=\{v\in V\setminus S_i: v^i_1<v<v^i_{n^i}\},\quad \bar{n}^i=|\bar{S}^i|,
		\]
		such that $\bar{S}^0\cap \bar{S}^1=\emptyset$.
		
		Fix $i\in\{0,1\}$ and suppose that $n^i\geq 2$. Combining both the fact that $(v^i_j,v^i_k)\in E$ for every $j, k$ with $j<k$, and that for every $u\in \{1,\ldots,n-1\}$ and $v\in \{u+1,\ldots,\min\{u+r,n\}\},\ (u,v)\notin E$, we have that for every $j\in [n^i-1]$ it holds $v^i_{j+1}-v^i_{j}\geq r+1$. Summing over $j$ yields $v^i_{n^i}-v^i_1\geq (n^i-1)(r+1)$, hence
		\[
		\bar{n}^i=v^i_{n^i}-v^i_1+1-n^i \geq (n^i-1)(r+1)+1-n^i =(n^i-1)r,
		\]
		where the first equality comes from the definition of the set $\bar{S}^i$. 
		This implies $n^i\leq 1+ \bar{n}^i/r$. 
		We can now lift the assumption $n^i\geq 2$, since when $n^i=1$ we have $\bar{n}^i=0$ and the inequality holds as well, and write the following chain of inequalities:
		\[
		|\calP(G)|=n^0+n^1 \leq 2+\frac{\bar{n}^0+\bar{n}^1}{r} \leq 2+\frac{n-|\calP(G)|}{r},
		\]
		where the last inequality comes from the fact that all the sets $S^0,\ S^1,\ \bar{S}^0,\ \bar{S}^1$ are disjoint and therefore their cardinalities sum up to at most $n$. This bounds the number of selected vertices as $|\calP(G)|\leq (2r+n)/(r+1)$.
		
		Suppose now that $|\calP(G)|\geq k+1$. Using the previous bound, this yields
		\[
		2r+n  \geq (k+1)(r+1) \Longleftrightarrow r \leq \frac{n-k-1}{k-1} = \frac{n-2}{k-1}-1,
		\]
		which contradicts the lower bound on $r$ in the statement of the lemma.
	\end{proof}
	
	\autoref{fig:example-pwru-deletion} illustrates the argument and notation of \autoref{lem:deletion}.
	We are now ready to prove \autoref{thm:ub-deletion}.
	
	\begin{figure}[t]
		\centering
		\begin{tikzpicture}
			
			\Vertex[x=5.5, y=3, Math, shape=circle, color=white, size=.1, label=v^0_{n^0}, fontscale=1.2, position=above, distance=-.04cm]{A}
			\Text[x=7.25, y=3]{$\dots$}
			\Vertex[x=9, y=3, Math, shape=circle, color=white, size=.1, label=v^0_{2}, fontscale=1.2, position=above, distance=-.04cm]{B}
			\Vertex[x=10.3, y=3, Math, shape=circle, color=white, size=.1, label=v^0_{1}, fontscale=1.2, position=above, distance=-.04cm]{C}
			
			\Vertex[y=1.5, Math, shape=circle, color=white, size=.1, label=v^1_{n^1}, fontscale=1.2, position=below, distance=-.04cm]{D}
			\Text[x=1.75, y=1.5]{$\dots$}
			\Vertex[x=3.5, y=1.5, Math, shape=circle, color=white, size=.1, label=v^1_{2}, fontscale=1.2, position=below, distance=-.04cm]{E}
			\Vertex[x=4.8, y=1.5, Math, shape=circle, color=white, size=.1, label=v^1_{1}, fontscale=1.2, position=below, distance=-.04cm]{F}
			
			\Edge[Direct, color=black, lw=1pt, bend=-20](B)(A)
			\Edge[Direct, color=black, lw=1pt, bend=15](C)(A)
			\Edge[Direct, color=black, lw=1pt](C)(B)
			\Edge[Direct, color=black, lw=1pt, bend=20](E)(D)
			\Edge[Direct, color=black, lw=1pt, bend=-15](F)(D)
			\Edge[Direct, color=black, lw=1pt](F)(E)
			
			\Text[x=-1, y=3, fontsize=\small]{$\Delta$};
			\Text[x=-1, y=1.5, fontsize=\small]{$\Delta-1$};
			
			\Vertex[x=.3, Math, shape=circle, color=black, size=.1]{G}
			\Text[x=.68]{$\dots$}
			\Vertex[x=1, Math, shape=circle, color=black, size=.1]{H}
			\Text[x=.65, y=-.35]{$\underbrace{\hspace{1cm}}_{\geq r}$}
			
			\Text[x=1.75]{$\dots$}
			
			\Vertex[x=2.5, Math, shape=circle, color=black, size=.1]{I}
			\Text[x=2.88]{$\dots$}
			\Vertex[x=3.2, Math, shape=circle, color=black, size=.1]{J}
			\Text[x=2.85, y=-.35]{$\underbrace{\hspace{1cm}}_{\geq r}$}
			
			\Vertex[x=3.8, Math, shape=circle, color=black, size=.1]{K}
			\Text[x=4.18]{$\dots$}
			\Vertex[x=4.5, Math, shape=circle, color=black, size=.1]{L}
			\Text[x=4.15, y=-.35]{$\underbrace{\hspace{1cm}}_{\geq r}$}

			\Vertex[x=5.8, Math, shape=circle, color=black, size=.1]{G}
			\Text[x=6.18]{$\dots$}
			\Vertex[x=6.5, Math, shape=circle, color=black, size=.1]{H}
			\Text[x=6.15, y=-.35]{$\underbrace{\hspace{1cm}}_{\geq r}$}
			
			\Text[x=7.25]{$\dots$}
			
			\Vertex[x=8, Math, shape=circle, color=black, size=.1]{I}
			\Text[x=8.38]{$\dots$}
			\Vertex[x=8.7, Math, shape=circle, color=black, size=.1]{J}
			\Text[x=8.35, y=-.35]{$\underbrace{\hspace{1cm}}_{\geq r}$}
			
			\Vertex[x=9.3, Math, shape=circle, color=black, size=.1]{K}
			\Text[x=9.68]{$\dots$}
			\Vertex[x=10, Math, shape=circle, color=black, size=.1]{L}
			\Text[x=9.65, y=-.35]{$\underbrace{\hspace{1cm}}_{\geq r}$}
			
			\draw[dashed] (-.4,.8) rectangle (5.2,2.1);
			\Text[x=2.4, y=2.4]{$S_1$};
			
			\draw[dashed] (0,-.75) rectangle (4.8,.3);
			\Text[x=2.4, y=-1.05]{$\bar{S}_1$};
			
			\draw[dashed] (5.1,2.4) rectangle (10.7,3.7);
			\Text[x=7.9, y=2.1]{$S_0$};
			
			\draw[dashed] (5.5,-.75) rectangle (10.3,.3);
			\Text[x=7.9, y=-1.05]{$\bar{S}_0$};
			
		\end{tikzpicture}
		\caption{Illustration of \autoref{lem:deletion}. There are no edges from a vertex to any of the $r$ vertices to its left, which means that for each vertex in $S_0$ or $S_1$, except for the left-most vertex, there exist are at least $r$ vertices outside these sets. Such vertices are not arranged according to their indegrees, and edges from vertices in $S_1$ to every vertex in $S_0$ have been omitted for clarity.}
		\label{fig:example-pwru-deletion}
	\end{figure}
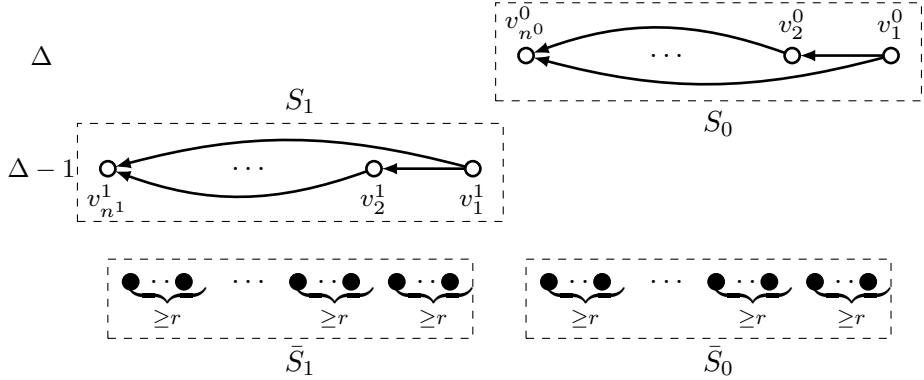
	
	\begin{proof}[Proof of \autoref{thm:ub-deletion}]
		We show that \autoref{alg:PWRU-deletion} satisfies the conditions of the theorem. Let $n\in \NN$ and $k\in \{2,\ldots,n\}$.
		Impartiality follows from the fact that \autoref{alg:PWRU-order} is impartial, thus the potential deletion of outgoing edges of a given vertex cannot affect the fact of selecting this vertex or not.
		Formally, if $G=(V,E),\ v\in V$ and $G'=(V,E')\in \calN_v(G)$, then defining $\bar{G}=(V,\bar{E})$ and $\bar{G}'=(V,\bar{E}')$ as the graphs constructed when running \autoref{alg:PWRU-deletion} with $G$ and $G'$ as input graphs, respectively, we have 
		\begin{align*}
			\bar{E}\setminus (\{v\}\times V) & = (E\setminus  (\{v\}\times V))\setminus \left(\bigcup_{u=1}^{n-1}\bigcup_{w=u+1}^{\min\{u+r,n\}}\{(u,w)\}\right)\\
			& = (E'\setminus  (\{v\}\times V))\setminus \left(\bigcup_{u=1}^{n-1}\bigcup_{w=u+1}^{\min\{u+r,n\}}\{(u,w)\}\right)\\
			& = \bar{E}'\setminus (\{v\}\times V),
		\end{align*}
		where we use that $G'\in \calN_v(G)$.
		Impartiality then follows directly from impartiality of plurality with runners-up.
		For the following, let $G=(V,E)\in \calG_n$ and define $r$ and $\bar{G}$ as in the mechanism.
		Since the first step of the mechanism ensures that for every $u\in \{1,\ldots,n-1\}$ and every $v\in \{u+1,\ldots, \min\{u+r,n\}\},\ (u,v)\notin E$, \autoref{lem:deletion} implies that $|\calP^D(G)|=|\calP(\bar{G})|\leq k$.
		Finally, in order to show the additive guarantee we first note that, for every $v\in V,\ \delta^-(v, G) \leq \delta^-(v, \bar{G})+r$, since at most $|\{v-r,\ldots,v-1\}\cap V|\leq r$ incoming edges of $v$ are deleted when defining $\bar{G}$ from $G$.
		In particular, $\Delta(G) \leq \Delta(\bar{G})+r$.
		Using this observation and denoting $v^*\in\text{argmin}_{v\in \calP^D(G)}\{\delta^-(v,G)\}$ an arbitrary element with minimum indegree among those selected by asymmetric plurality with runners-up and edge deletion, we obtain that
		\[
		\delta^-(v^*,G) \geq \delta^-(v^*,\bar{G}) \geq \Delta(\bar{G})-1 \geq \Delta(G)-r - 1,
		\]
		where the second inequality comes from \autoref{lem:char-pwru}, since $v^*$ belongs to $\calP(\bar{G})$.
		We conclude that the mechanism is $(r+1)$-min-additive for $r=\lfloor (n-2)/(k-1)\rfloor$.
	\end{proof}

	It is easy to see that the previous analysis is tight from a graph $G=(V,E)$ where exactly $r=\lfloor (n-2)/(k-1)\rfloor$ incoming edges of the top-voted vertex are deleted, and a vertex with the second highest indegree $u$ such that $u>\text{top}(G),\ (u,\text{top}(G))\in E$, and $\delta^-(u)=\Delta(G)-r-1$ is selected. However, we do not know whether the tradeoff provided by \autoref{thm:ub-deletion} is best possible for any impartial mechanism, and the question for the optimum tradeoff is an interesting one. Currently, when $d\geq k$ a gap remains between the upper bound of $\lfloor (n-2)/(k-1)\rfloor+1$ and a lower bound of $1$, which is relatively large when the number $k$ of vertices that can be selected is small. We may, alternatively, also ask for the number of vertices that have to be selected in order to guarantee $1$-min-additivity. Currently, the best upper bound on this number is $n-1$.
	
	In addition to the question about the performance of the mechanism introduced in this section, the sole fact that sometimes it does not select vertices with indegree strictly higher than the one of other selected vertices may seem unfair.
	Unfortunately, this is unavoidable whenever $d\geq k$ and $\alpha$-min-additivity is imposed for some $\alpha<d$, as one can see from a graph consisting of a complete subgraph on $d+1$ vertices and $n-(d+1)$ isolated vertices. For any $k$-selection mechanism, a vertex in the complete subgraph is not selected, and impartiality forces us to not select it either when its outgoing edges are deleted and it is the unique top-voted vertex.

	\subsubsection{Acknowledgments}
	
	The authors have benefitted from discussions with David Hannon. Research was supported by the Deutsche Forschungsgemeinschaft under project number~431465007 and by the Engineering and Physical Sciences Research Council under grant~EP/T015187/1.
	
	%% article
	\bibliographystyle{abbrvnat}
	\bibliography{WINE-bibliography}
	
\end{document}